\newtheorem{definition}{Definition}
\newtheorem{fact}{Fact}
\newtheorem{theorem}{Theorem}
\newtheorem{corollary}{Corollary}
\newenvironment{proof}{\textbf{Proof:}}{\hfill$\square$}
\newcommand{\cE}{\mathcal{E}}
\newcommand{\cN}{\mathcal{N}}
\newcommand{\cT}{\mathcal{T}}
\newcommand{\C}{\mathbb{C}}
\newcommand{\R}{\mathbb{R}}
\newcommand{\Hmin}{H_{\mathrm{min}}}
\newcommand{\supp}{\mathrm{supp}}
\DeclareMathOperator*{\E}{{\rm {\mathbb E}}\,}
\DeclareMathOperator*{\Tr}{{\rm Tr}\;}
\DeclareMathOperator*{\prob}{{\rm Pr}\;}
\newcommand{\zero}{\leavevmode\hbox{\small l\kern-3.5pt\normalsize0}}
\newcommand{\one}{\leavevmode\hbox{\small1\kern-3.8pt\normalsize1}}
\newcommand{\ket}[1]{| #1 \rangle}
\newcommand{\bra}[1]{\langle #1 |}
\newcommand{\ketbra}[1]{\ket{#1}\bra{#1}}
\newcommand{\braket}[2]{\langle {#1} \ket{#2}}
\begin{document}

\title{{\bf Matrix Chernoff concentration bounds for multipartite
           soft covering and expander walks
}}

\author{Pranab Sen\footnote{
School of Technology and Computer Science, Tata Institute of Fundamental
Research, Mumbai, India. 
Email: {\sf pranab.sen.73@gmail.com}.
}
}

\date{}

\maketitle

\begin{abstract}
We prove Chernoff style exponential concentration bounds for classical
quantum soft covering generalising previous works which gave bounds only 
in expectation. Our first result is an exponential concentration bound 
for fully smooth multipartite classical quantum soft covering, extending 
Ahlswede-Winter's seminal result \cite{Ahlswede:matrixchernoff} in several 
important directions.
Next, we prove a new exponential concentration
result for smooth unipartite classical quantum soft covering when
the samples are taken via a random walk on an expander graph. The resulting
expander matrix Chernoff bound complements the results of Garg, Lee,
Song and Srivastava
\cite{Garg:expandermatrixchernoff} in important ways. We prove our
new expander matrix Chernoff bound by generalising McDiarmid's method
of bounded differences for functions of independent random variables
to a new method of {\em bounded excision for functions of
expander walks}. This new technical tool should be of independent interest.

A notable feature of our new concentration bounds is that they have no
explicit Hilbert space dimension factor. This is because our bounds are
stated in terms of the Schatten $\ell_1$-distance of the sample
averaged quantum state to the `ideal' quantum state. Our bounds are 
sensitive to certain smooth R\'{e}nyi max divergences, giving a clear
handle on the number of samples required to achieve a target 
$\ell_1$-distance. Using these novel features, we prove new one 
shot inner bounds for sending
private classical information over different kinds of quantum wiretap
channels with many non-interacting eavesdroppers that are independent
of the Hilbert space dimensions of the eavesdroppers. Such powerful
results were unknown earlier even in the fully classical  setting.
\end{abstract}

\section{Introduction}
The foundational works of Chernoff \cite{Chernoff:conc} and 
Hoeffding \cite{Hoeffding:conc} showed that an average
of $n$ independent samples of a bounded random variable is concentrated 
around its true mean exponentially in $n$. Their results, and 
extensions thereof,
have found countless applications in probability theory, statistics and
computer science. 

Chernoff's and Hoeffding's results were for sample averages of bounded 
real valued random variables. 
In their seminal paper, Ahlswede and Winter \cite{Ahlswede:matrixchernoff}
extended Chernoff style concentration results to matrix valued random
variables; they called their main result a {\em matrix Chernoff bound}. 
Several later works have improved and extended Ahlswede-Winter's
result in multiple ways. Tropp's book \cite{Tropp:book} contains a
detailed survey of various matrix Chernoff bounds together with several
applications. 

An important area of application of matrix Chernoff bounds is in quantum
information theory. More specifically, matrix Chernoff bounds can be
used to prove concentration results for so-called {\em classical quantum
soft covering} problems. In the basic version of the problem, there is a 
classical
random variable $X$ taking value $x$ with probability $p(x)$. For each
$x$, one is given a quantum state aka density matrix $\rho_x^M$, i.e. a
complex Hermitian positive semidefinite matrix with unit trace, acting on
a Hilbert space $M$. Define the `ideal' mean quantum state
$\rho^M := \E_{x}[\rho_x^M] := \sum_x p(x) \rho_x^M$. We want upper bounds
on the following tail probability:
\[
\Pr_{x_1, \ldots, x_K}
\left[
\left\|
\frac{1}{K} \sum_{i=1}^K \rho_{x_i}^M
- \rho^M
\right\|_1
> \delta
\right]
\]
where $\delta > 0$, $\|\cdot\|_1$ denotes the Schatten $\ell_1$-norm
aka the {\em trace norm} of matrices, and the probability is taken
over $x_1, \ldots, x_K$ with each $x_i$ being independently chosen with
probability $p(x_i)$. The tail upper bound is considered to be Chernoff
style or exponential if it is less than $A \exp(-K B)$ for suitable 
$A$, $B$ that may depend on the ensemble $\{\rho_x^M\}_x$, the dimension 
of $M$ and $\delta$ but cannot depend on the number of samples $K$.
We will call an above such problem a {\em soft classical quantum 
covering problem in concentration}.

Quantum information theory has extensively studied a 
related problem that we call a {\em soft classical quantum 
covering problem in expectation}. In this variant, we are interested
in upper bounds on the expected trace distance between the sample
averaged state and the ideal state in expectation over the choice of
$x_1, \ldots, x_K$ viz. we want an upper bound on
\[
\E_{x_1, \ldots, x_K}
\left[
\left\|
\frac{1}{K} \sum_{i=1}^K \rho_{x_i}^M
- \rho^M
\right\|_1
\right],
\]
where the probability is taken
over $x_1, \ldots, x_K$ with each $x_i$ being independently chosen with
probability $p(x_i)$.

Fully classical versions of the above soft covering questions were defined 
and studied in the work of Cuff \cite{Cuff:softcovering}.
Cuff's soft covering results were proved in the classical asymptotic 
independent and identically distributed (iid) setting, where one
takes samples not from the random variable $X$ but from $X^{\times n}$ i.e.
$n$ independent copies of $X$. Henceforth, the setting where one has to 
take independent samples from a single copy of $X$ will be called the
{\em one shot} setting. One shot setting is the most fundamental setting;
the asymptotic iid setting can be derived from the one shot setting by
treating $X^{\times n}$ as a single copy of a new random variable $Y$.
One shot setting is important in its own right when the iid assumption
is not valid; it also often serves as the starting point in proving
second order finite blocklength results.
Hence, it is important to obtain one shot classical quantum soft covering
results in both expectation as well as concentration. Prior to this work,
we are unaware of any concentration result for soft covering
in the classical one shot setting. 

After Cuff's work, a fully quantum version of soft covering in
expectation was defined and studied in the seminal work of
Anshu, Devabathini and Jain \cite{Jain:convexsplit}. They called it 
the {\em convex split lemma}. For the classical
quantum setting which is intermediate between the fully classical and
fully quantum settings, the convex split lemma reduces to the 
classical quantum soft covering problem in expectation given above.
The classical quantum version arises in the study of so called 
{\em covering style} problems for handling classical messages via
quantum channels.
Most often, one only needs covering in expectation. For example, inner
bounds for private classical communication over quantum wiretap 
channels \cite{Devetak:wiretap, Wilde:wiretap}, 
though sometimes published using
the concentration version, actually require only the expectation version of
classical quantum soft covering.
A similar remark can be made for measurement compression results
\cite{Winter:measurementcompression}. Classical quantum soft covering in 
concentration is 
required in specialised cases e.g. to prove good dimension independent 
inner bounds for wiretap channels having many
{\em non-interacting eavesdroppers}, as we will see in detail below.

A covering result in concentration clearly implies a result in expectation.
The converse is generally false. Until now, soft covering results in 
expectation and concentration were proved using completely different sets 
of techniques. To the best of our knowledge, no prior work proved a 
concentration result by first proving an expectation result. This
shortcoming becomes even more evident when one aims for 
{\em multipartite classical quantum soft covering} results. In its 
simplest bipartite version, there
are two independent random variables $X$ and $Y$ and a classical quantum
mapping $(x,y) \mapsto \rho_{xy}^M$. The `ideal' mean quantum state
is defined as 
$\rho^M := \E_{x,y}[\rho^M_{xy}] := \sum_{x,y} p(x) p(y) \rho^M_{x,y}$.
The bipartite soft covering lemmas in expectation and concentration 
seek upper bounds on the following quantities respectively:
\[
\E_{\substack{x_1, \ldots, x_K \\ y_1, \ldots, y_L}}
\left[
\left\|
\frac{1}{KL} \sum_{i=1}^K \sum_{j=1}^L \rho_{x_i,y_j}^M
- \rho^M
\right\|_1
\right]
~~~ \mbox{and} ~~~
\Pr_{\substack{x_1, \ldots, x_K \\ y_1, \ldots, y_L}}
\left[
\left\|
\frac{1}{KL} \sum_{i=1}^K \sum_{j=1}^L \rho_{x_i,y_j}^M
- \rho^M
\right\|_1
> \delta
\right],
\]
where the expectations are taken over independent choices of
$x_1, \ldots, x_K$ from $X^{\times K}$, and independent choices of
$y_1, \ldots, y_L$ from $Y^{\times K}$. In particular, the choices of
$x_1, \ldots, x_K$ are also independent from the choices of
$y_1, \ldots, y_L$.

Till the work of Anshu, Jain and Warsi \cite{anshu:slepianwolf}, no
multipartite soft covering lemma in expectation was known. Prior to
the present work, no multipartite soft covering lemma in concentration was 
known either. Multipartite covering lemmas are natural tools for tackling
multiterminal versions of covering problems in information theory.
For example, good inner bounds for sending private classical information
over wiretap quantum multiple
access channels (wiretap QMAC) \cite{Chakraborty:wiretapQMAC}, or good 
achievabilty results for the
{\em centralised multilink measurement compression} problem 
\cite{Chakraborty:measurementcompression}
can be proved if one were to have a powerful multipartite soft covering 
lemma in expectation. Similarly,  a powerful multipartite soft covering 
lemma in concentration would allow one to prove dimension independent 
inner bounds for a wiretap QMAC with many non-interacting eavesdroppers,
as will become clearer below.

The statements of the unipartite one shot soft covering lemma in 
expectation
proved earlier were given in terms of a smooth one shot R\'{e}nyi max 
divergence property of the ensemble of quantum states. Bounds in terms 
of smooth one shot divergences are much more desirable than bounds in 
terms of non-smooth one shot quantities. This is because only the 
smooth one shot version
converges to the correct quantity in the asymptotic iid limit. Moreover,
only the smooth one shot version leads to good second order inner bounds.
Since the unipartite soft covering
lemmas in concentration known earlier were proved using different
techniques, they did not contain any explicit dependence of the 
concentraion
on the smooth R\'{e}nyi max divergence property.  This lacuna was first
addressed in the work of Radhakrishan, Sen and Warsi 
\cite{Radhakrishnan:wiretap}, whose statement of their unipartite soft 
covering lemma in concentration contained the same smooth R\'{e}nyi max 
divergence property used in the statement of the expectation version.
However, even that work proved the concentration result directly, and
did not make use of the techniques used to show the expectation version.
Radhakrishnan, Sen and Warsi also proved a new exponential Chernoff style 
concentration result for non-square matrices in terms of a similar
smooth R\'{e}nyi max divergence property. However, both of their results
have an explicit Hilbert space
dimension factor which makes their inner bound for the quantum wiretap
channel dependent on the dimensions of the Hilbert spaces of the 
eavesdroppers. More specifically, Radhakrishnan, Sen and Warsi's inner 
bound takes an additive hit
of $\log \log D$, where $D$ is the largest Hilbert space dimension of an
eavesdropper. This is unsatisfactory.

Till recently, the only results known for multipartite soft classical
quantum covering in expectation uere stated in terms of non-smooth
R\'{e}nyi divergence quantities. The works \cite{Sen:telescoping,
Sen:flatten} proved fully smooth multipartite classical quantum soft
covering results in expectation for the first time. Fully smooth 
means that the upper bounds are stated in terms of smooth R\'{e}nyi
divergence quantities of all possible subsets of the classical random 
variables involved in the soft covering. The work of
\cite{Sen:telescoping} proved fully smooth multipartite soft covering 
bounds in expectation,
 as well as fully smooth multipartite convex split bounds in terms of
smooth R\'{e}nyi $2$-divergence quantities. This automatically implies
the same bounds in terms of smooth R\'{e}nyi max divergence quantities.
Moreover, the soft covering bounds continue to hold even if 
$x_1, \ldots, x_K$ are chosen in pairwise independent fashion with each
marginal having the distribution of $X$, and similarly for 
$y_1, \ldots, y_L$, with $x_1, \ldots, x_K$ being independent of
$y_1, \ldots, y_L$.  The work of \cite{Sen:flatten} showed that the same
fully smooth upper bound in terms of smmoth R\'{e}nyi max divergence 
continues to hold even if the pairwise independent assumption is slightly
weakened. Since \cite{Sen:telescoping, Sen:flatten} show that 
fully smooth
multipartite soft covering in expectation holds even under weak 
assumptions like `almost' pairwise independence, one wonders if 
multipartite soft convering in concentration would hold if one had
stronger assumptions like full independence of the samples.
Unfortunately, the above two works did not prove multipartite soft
covering results in concentration. The present paper fulfills this
shortcoming.

A different generalisation of the Chernoff bound was given by
Gillman \cite{Gillman:expanderchernoff}, who showed exponential 
concentration of the sample average around the true mean when the
samples are taken via a random walk on an expander graph. Since sampling
from an expander walk requires much less randomness than sampling 
independently, Gillman's {\em expander Chernoff} result immediately
finds several applications towards randomness efficient sampling and
derandomisation \cite{Gillman:expanderchernoff}. 
Using sophisticated techniques, Garg, Lee, Song and 
Srivastava \cite{Garg:expandermatrixchernoff} managed to marry 
Ahlswede-Winter's matrix Chernoff bound with Gillman's expander Chernoff 
bound obtaining, for the first time, an {\em expander matrix Chernoff
bound} as follows:
\begin{quote}{\bf \cite[Theorem 1.2]{Garg:expandermatrixchernoff}}
Let $X$ be the vertex set of a regular, undirected constant degree 
expander graph with second eigenvalue of absolute value $\lambda$.
Let $f: X \rightarrow \C^{d \times d}$ be a $d \times d$ matrix valued
function on $X$ such that the Schatten $\ell_\infty$-norm 
$\|f(x)\|_\infty \leq 1$ for all $x \in X$ and
$\sum_{x \in X} f(x) = 0$. Let $x_1, \ldots, x_K$ be a 
{\em stationary random walk on $X$} i.e. $x_1$ is chosen from the unique
stationary distribution on $X$ which happens to be the uniform 
distribution,
and then $x_2, x_3, \ldots,$ are chosen via a random walk starting from
$x_1$. Let $0 < \epsilon < 1$. Then:
\[
\Pr_{x_1, \ldots, x_K} 
\left[
\left\|
\frac{1}{K} \sum_{i=1}^K f(x_i)
\right\|_\infty 
\geq \epsilon
\right]
\leq
d \exp(-C \epsilon^2 (1-\lambda) K),
\]
where $C$ is a universal constant.
\end{quote}
Again, such an expander matrix Chernoff bound has several applications
in derandomisation \cite{Xiao:expandermatrixchernoff}.

Our first result viz. Theorem~\ref{thm:smoothcoveringconc} below is 
a fully smooth
multipartite classical quantum soft covering lemma in 
concentration.  This result leads, to the best of our knowledge, 
the first matrix Chernoff
bound where, to get a target distance between the sample averaged state
and the ideal state, the sample size only depends on a smooth R\'{e}nyi max
divergence term
and the concentration bound has no dependence on the dimension of the
ambient Hilbert space of the matrices. 
We believe that this is an important conceptual contribution of this work.
Even restricted to the unipartite 
setting as in Corollary~\ref{cor:matrixchernoff}, our new matrix 
Chernoff bound generalises the Ahlswede-Winter 
bound in these two senses. This is  because the Ahlswede-Winter bound
was stated in terms of a non-smooth max
divergence; also it had an additional  weak dependence on 
the dimension of the
Hilbert space $M$ of the density matrices viz. their concentration result
was of the form $|M| \exp(-|A| \cdots)$.
These two deficiencies are 
significant weaknesses in some applications  to 
quantum information theory e.g. in proving inner bounds for the quantum
wiretap channel independent of the dimensions of the Hilbert spaces of
the eavesdroppers.
Intuitively, these two improvements become possible because our new matrix
Chernoff bound guarantees closeness in the Schatten $\ell_1$-distance 
whereas the Ahlswede-Winter bound guarantees closeness in the 
Schatten $\ell_\infty$-distance. We prove our multipartite soft covering
lemma in concentration by taking the fully smooth multipartite soft
covering lemma in expectation of \cite{Sen:telescoping} and then applying
McDiarmid's {\em method of bounded differences} for independent random 
variables \cite{McDiarmid:bounded} to upper bound the tail
probability.

As a consequence of our new matrix Chernoff bound, we obtain 
the first inner bound in Theorem~\ref{thm:wiretap} below for sending
private classical information over a point to point quantum wiretap
channel in the presence of many non-interacting eavesdroppers that
does not depend on the dimensions of the Hilbert spaces of the
eavesdroppers. As mentioned above,
\cite{Radhakrishnan:wiretap} proved a weak dimension dependent inner
bound for a wiretap channel with many eavesdroppers using their
weak dimension dependent matrix Chernoff bound.
The paper \cite{Wilde:wiretap} proved a dimension independent inner 
bound for one eavesdropper; it could not handle many eavesdroppers
because it used a smooth covering lemma in expectation, not concentration.
Earlier works on both classical and quantum wiretap channels 
used other techniques
but nevertheless, they too could not handle many eavesdroppers. Thus,
Theorem~\ref{thm:wiretap} is a new result even for classical
asymptotic iid information theory. An analogous new 
eavesdropper-dimension independent inner bound for
private classical communication over a wiretap QMAC with many 
non-interacting eavesdroppers is proved in 
Theorem~\ref{thm:wiretapQMACconc} below.

Our second result is a unipartite smooth soft covering lemma in 
concentration (Theorem~\ref{thm:expanderconc}) where the samples are 
taken via a random walk on an
expander graph. Put
differently, our second result is a new expander matrix Chernoff bound 
for quantum states where the distance between the sample averaged
state and the ideal state is measured in terms of the Schatten 
$\ell_1$-norm instead of the Schatten $\ell_\infty$-norm used 
by \cite{Garg:expandermatrixchernoff}. Our second result does
not have any explicit dependence on the dimension of the ambient Hilbert
space of the quantum states, and the sample size depends only on the
smooth R\'{e}nyi max divergence of the ensemble and the properties of
the expander graph. Our expander matrix Chernoff bound is obtained by
proving, for the first time, a unipartite smooth soft covering lemma 
in expectation for expander walks using the so-called 
{\em matrix weighted Cauchy Schwarz inequality}. We then develop a 
novel concentration technique which we call the 
{\em method of bounded excision}. This method, which should be of 
independent interest, can be thought of as a 
generalisation of McDiarmid's method of bounded differences where the
independent random samples are replaced by samples from an expander walk.
We then convert the soft covering result in expectation to a
result in concentration by using the method of bounded excision
to prove an upper bound on the tail probability.

Finally, we  note how Garg, Lee, Song and Srivastava's 
\cite{Garg:expandermatrixchernoff} expander matrix Chernoff bound
in terms of Schatten $\ell_\infty$-distance does not give useful
results if used to obtain bounds in terms of Schatten $\ell_1$-distance 
as required in information theoretic applications like the wiretap
channel. If a target maximum distance of $\epsilon$ is required in
Schatten $\ell_1$, one needs to start of with $\epsilon/d$ distance in
Schatten $\ell_\infty$, where $d$ is the dimension of the ambient 
Hilbert space of the quantum states. Then the upper bound provided
by Garg et al.'s bound is
$
d \exp\left(- \frac{C \epsilon^2 (1 - \lambda) K}{d^2}\right),
$
which means that the number of samples $K$ must be at least 
$d^2 \log d$ in order to get even a constant amount of concentration
probability. On the other hand for a constant amount of concentration
probability,  Theorem~\ref{thm:expanderconc} below
gives a bound for $K$ 
which is always at most, and often significantly less than, $d \log |X|$,
where $X$ is the vertex set of the expander graph.

\section{Preliminaries}
\label{sec:prelims}
Let $p \geq 1$.
For any matrix $M$, its {\em Schatten $\ell_p$-norm} is defined as
$\|M\|_p := (\Tr [(M^\dag M)^{p/2}])^{1/p}$. In other words, 
$\|M\|_p$ is the $\ell_p$-norm of the vector of its singular values.
The norm $\|M\|_1$ is also called the {\em trace norm} of $M$.
The norm $\|M\|_2$ is also called the {\em Frobenius norm} or the
{\em Hilbert-Schmidt} norm of $M$ and is nothing but the $\ell_2$-norm 
of the vector that would arise if all the matrix entries of $M$ 
were written down in a linear fashion.
The norm $\|M\|_\infty$ is also called the {\em operator norm} of $M$ 
because of the equality 
$\|M\|_\infty = \max_{\|v\|_2 = 1} \|M v\|_2$.
The support of a Hermitian matrix $M$, denoted by $\supp(M)$, is the 
span of the non-zero eigenspaces of $M$.
For a Hermitian matrix $M$ and $\alpha > 0$, we define $M^\alpha$ in
the natural fashion by appealing to the eigenbasis of $M$. If
$\alpha < 0$, we define $M^\alpha$ in the natural fashion only on 
$\supp(M)$, and keep the zero eigenspace of $M$ as the zero eigenspace 
of $M^\alpha$. Equivalently, we take the inverse $M^{-1}$ on $\supp(M)$
only, a process called the {\em Moore-Penrose pseudoinverse}, and then
define $M^\alpha := (M^{-1})^{-\alpha}$ for $\alpha < 0$.
Similarly, for a positive semidefinite $M$, we define $\log M$ in the
natural fashion on $\supp(M)$ by appealing to the eigenbasis of $M$,
keeping the zero eigenspace of $M$ as the zero eigenspace of $\log M$.

We now define a few entropic quantities that will be required
to state our results below. For a classical alphabet $X$ or a Hilbert
space $X$, $|X|$ denotes the cardinality of the alphabet or the 
dimension of the Hilbert space respectively. The quantity 
$\frac{\one^X}{|X|}$ denotes the uniform probability distribution / 
completely mixed state on  classical alphabet $X$ / Hilbert space $X$
respectively.
\begin{definition}[{\bf (Smooth hypothesis testing divergence)}]
Let $0 < \epsilon < 1$.
Let $\alpha^A$, $\beta^A$
between two density matrices 
defined on the same Hilbert space $A$. The {\em smooth hypothesis testing
divergence of $\alpha^A$ with respect to $\beta^A$} is defined as
\[
D^\epsilon_H(\alpha \| \beta) :=
\max_{\Pi: \zero^A \leq \Pi \leq \one^A}
\{
-\log \Tr[\Pi \beta] : \Tr[\Pi \alpha] \geq 1 - \epsilon
\},
\]
where the maximisation is over all POVM elements $\Pi$ on $A$.
\end{definition}
\begin{definition}[{\bf (Smooth hypothesis testing mutual information)}]
Let $0 < \epsilon < 1$.
The {\em smooth hypothesis testing mutual information under joint quantum 
state $\rho^{AB}$} is defined as
\[
I^\epsilon_H(A:B)_\rho :=
D^\epsilon_H(\rho^{AB} \| \rho^A \otimes \rho^B).
\]
\end{definition}
\begin{definition}[{\bf (Smooth conditional hypothesis testing mutual 
information)}]
Let $0 < \epsilon < 1$.
Let $Q$, $X$, $Y$ be classical systems. Let $C$ be a quantum system. 
Let $p^{QXY}$ be a normalised joint
probability distribution on $QXY$ of the form $p(q) p(x|q) p(y|q)$
i.e. $X$ and $Y$ are independent given $Q$. Let 
$(x,y) \mapsto \sigma^C_{xy}$ be a classical to quantum mapping.
The joint classical quantum state $\sigma^{QXYC}$ is defined as
\[
\sigma^{QXYC} :=
\sum_{qxy} p(q)p(x|q) p(y|q) \ketbra{q,x,y}^{QXY} \otimes \sigma^C_{xy}.
\]
The {\em smooth conditional hypothesis testing mutual information under
$\sigma^{QXYC}$} is defined as
\begin{eqnarray*}
I^\epsilon_H(XY : C | Q)_\sigma 
& := &
D^\epsilon_H(\sigma^{QXYC} \| \bar{\sigma}^{XY:C|Q}), \\
\bar{\sigma}^{XY:C|Q} 
& := &
\sum_{qxy} p(q)p(x|q) p(y|q) \ketbra{q,x,y}^{QXY} \otimes
\sigma^C_q, \\
\sigma^C_q 
& := &
\sum_{x,y} p(x|q) p(y|q) \sigma^C_{xy}.
\end{eqnarray*}
\end{definition}
\begin{definition}[{\bf (R\'{e}nyi divergences)}]
Let $M$ be a quantum system. Let $\alpha^M$ be a subnormalised quantum
state and $\beta^M$ be a normalised quantum state in $M$. The 
{\em (non-smooth) R\'{e}nyi $2$-divergence} of $\alpha^M$ with 
respect to $\beta^M$ is defined as
\[
D_2(\alpha \| \beta) := 
2 \log\|\beta^{-1/4} \alpha \beta^{-1/4}\|_2, ~~ 
\mbox{if $\supp(\alpha) \leq \supp(\beta)$, $+\infty$ otherwise}.
\]
The {\em (non-smooth) R\'{e}nyi $\infty$-divergence} aka 
{\em (non-smooth) R\'{e}nyi max divergence}
of $\alpha^M$ with respect to $\beta^M$ is defined as
\[
D_\infty(\alpha \| \beta) := 
\log\|\beta^{-1/2} \alpha \beta^{-1/2}\|_\infty, ~~
\mbox{if $\supp(\alpha) \leq \supp(\beta)$, $+\infty$ otherwise}.
\]
\end{definition}
\begin{definition}[{\bf (Shannon divergence)}]
The (non-smooth) {\em Shannon divergence} aka 
{\em Kullback-Leibler divergence} 
aka {\em relative entropy} of $\alpha^M$ with 
respect to $\beta^M$ is defined as
\[
D(\alpha \| \beta) := 
\Tr [\alpha (\log \alpha - \log \beta)], ~~
\mbox{if $\supp(\alpha) \leq \supp(\beta)$, $+\infty$ otherwise}.
\]
\end{definition}
\begin{definition}[{\bf (Smooth R\'{e}nyi divergences)}]
Let $0 < \epsilon < 1$.
The {\em $\epsilon$-smooth R\'{e}nyi divergences} are defined as follows:
\[
D_2^\epsilon(\alpha \| \beta) 
:=  
\min_{\alpha' \approx_\epsilon \alpha}
D_2(\alpha' \| \beta), ~~
D_\infty^\epsilon(\alpha \| \beta) 
:= 
\min_{\alpha' \approx_\epsilon \alpha}
D_\infty(\alpha' \| \beta), 
\]
where the minimisation is over all subnormalised density matrices 
$\alpha'$ satisfying $\|\alpha' - \alpha\|_1 \leq \epsilon (\Tr\alpha)$. 
\end{definition}
\begin{definition}[{\bf (R\'{e}nyi and Shannon mutual information)}]
Let $\sigma^{XE}$ be a joint quantum state on the quantum system $X E$.
The {\em (non-smooth) R\'{e}nyi-2 and R\'{e}nyi-max mutual informations} 
are defined as
\[
\begin{array}{c}
I_2(X:E)_\sigma := 
D_2(\sigma^{XE} \| \sigma^{X} \otimes \sigma^E), \\
I_\infty(X:E)_\sigma := 
D_\infty(\sigma^{XE} \| \sigma^{X} \otimes \sigma^E), \\
\end{array}
\]
The {\em smooth R\'{e}nyi-2 and R\'{e}nyi-max mutual informations} 
are defined as
\[
I_2^\epsilon(X : E)_\sigma
:=  
D_2^\epsilon(\sigma^{XE} \| \sigma^{X} \otimes \sigma^{E}), \\
I_\infty^\epsilon(X : E)_\sigma
:= 
D_\infty^\epsilon(\sigma^{XE} \| \sigma^{X} \otimes \sigma^{E}).
\]
The (non-smooth) Shannon mutual information is also defined similarly.
\[
I(X:E)_\sigma := 
D(\sigma^{XE} \| \sigma^{X} \otimes \sigma^E). 
\]
\end{definition}
\begin{definition}[{\bf (R\'{e}nyi and Shannon conditional entropies)}]
The {\em (non-smooth) R\'{e}nyi-2 and R\'{e}nyi-max conditional entropies} 
are defined as
\[
\begin{array}{c}
H_2(X|E)_\sigma := 
\log |X| - D_2(\sigma^{XE} \| \frac{1^X}{|X|} \otimes \sigma^{E}), \\
\Hmin(X|E)_\sigma := H_\infty(X|E)_\sigma := 
\log |X| - D_\infty(\sigma^{XE} \| \frac{1^X}{|X|} \otimes \sigma^E).
\end{array}
\]
The (non-smooth) {\em Shannon conditional entropy} is defined similarly.
\[
H(X|E)_\sigma := 
\log |X| - D(\sigma^{XE} \| \frac{1^X}{|X|} \otimes \sigma^E).
\]
The {\em smooth R\'{e}nyi-2 and R\'{e}nyi-max conditional entropies} 
are defined as
\[
\begin{array}{c}
H_2^\epsilon(X|E)_\sigma := 
\min_{\sigma^{'XE} \approx_\epsilon \sigma^{XE}}
\{\log |X| - D_2(\sigma^{'XE} \| \frac{1^X}{|X|} \otimes \sigma^{'E})\}, \\
\Hmin^\epsilon(X|E)_\sigma := H^\epsilon_\infty(X|E)_\sigma := 
\min_{\sigma^{'XE} \approx_\epsilon \sigma^{XE}}
\{\log |X| - D_\infty(\sigma^{'XE}\|\frac{1^X}{|X|} \otimes \sigma^{'E})\}.
\end{array}
\]
Above, 
the minimisation is over all subnormalised density matrices 
$\sigma^{'XE}$ satisfying 
$\|\sigma^{'XE} - \sigma^{XE}\|_1 \leq \epsilon (\Tr\,\sigma^{XE})$. 
\end{definition}

For the entropic quantities above, we note that
\[
\begin{array}{c}
D(\alpha \| \beta) \leq D_2(\alpha \| \beta) \leq 
D_\infty(\alpha \| \beta) \\
{} \implies 
I(X:E)_\sigma \leq I_2(X:E)_\sigma \leq I_\infty(X:E)_\sigma, ~~
D_2^\epsilon(\alpha \| \beta) \leq D_\infty^\epsilon(\alpha \| \beta), ~~
I_2^\epsilon(X:E)_\sigma \leq I_\infty^\epsilon(X:E)_\sigma.
\end{array}|
\]
Observe that the mutual information and conditional entropy quantities
are always finite. Slight variants of the above definitions for smooth
entropic quantities are also available in the literature, but they
are all `roughly' equivalent.
Note also that for a classical quantum (cq) state $\sigma^{XE}$, 
\[
I_2(X:E)_\sigma =
\log \E_{x} \Tr[((\sigma^E)^{-1/4} \sigma^E_{x} (\sigma^E)^{-1/4})^2],
\]
the expecation over $x$ being taken according to the classical probability
distribution $\sigma^{X}$.
For later use, we remark that for a cq state $\sigma^{XE}$,
\[
I_\infty(X:E)_\sigma =
\max_{x} D_\infty(\sigma^E_{x} \| \sigma^E).
\]

We will use the following {\em matrix weighted Cauchy-Schwarz} inequality
below in order to prove upper bounds on the trace norm.
\begin{fact}
\label{fact:matrixCauchySchwarz}
Let $M$ be a Hermitian matrix and $\sigma$ be a normalised density
matrix on the same Hilbert space. Suppose $\supp(M) \leq \supp(\sigma)$.
The matrix $\sigma$ is called a {\em weighting matrix}.
Then,
\[
\|M\|_1 \leq \|\sigma^{-1/4} M \sigma^{-1/4}\|_2.
\]
\end{fact}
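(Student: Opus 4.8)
The idea is to factor $M$ through $\sigma^{1/4}$ and then quote the generalised Hölder inequality for Schatten norms. Since $\supp(M) \leq \supp(\sigma)$, the matrix $N := \sigma^{-1/4} M \sigma^{-1/4}$, with $\sigma^{-1/4}$ interpreted via the Moore-Penrose pseudoinverse as in Section~\ref{sec:prelims}, is a well-defined Hermitian matrix, and moreover $\sigma^{1/4} N \sigma^{1/4} = M$. Indeed, $\sigma^{1/4} \sigma^{-1/4}$ equals the projector $\Pi$ onto $\supp(\sigma)$, and the hypothesis $\supp(M) \leq \supp(\sigma)$ together with Hermiticity of $M$ gives $\Pi M = M\Pi = M$, so conjugating $N$ by $\sigma^{1/4}$ recovers $\Pi M \Pi = M$.

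With this factorisation in hand, the bound follows from a single application of Hölder's inequality for Schatten norms: for matrices $A, B, C$ and exponents with $\frac1p + \frac1q + \frac1r = 1$ one has $\|ABC\|_1 \leq \|A\|_p \|B\|_q \|C\|_r$. Taking $A = C = \sigma^{1/4}$, $B = N$, and $p = r = 4$, $q = 2$, we obtain
\[
\|M\|_1 = \|\sigma^{1/4} N \sigma^{1/4}\|_1 \leq \|\sigma^{1/4}\|_4^2 \, \|N\|_2 .
\]
It then only remains to note that $\|\sigma^{1/4}\|_4 = (\Tr[(\sigma^{1/4})^4])^{1/4} = (\Tr\sigma)^{1/4} = 1$, since $\sigma$ is a normalised density matrix; substituting $N = \sigma^{-1/4} M \sigma^{-1/4}$ gives the claim.

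A self-contained alternative, not relying on the three-factor Hölder inequality, is to write $\|M\|_1 = \Tr[V M]$ where $V$ is the Hermitian sign operator of $M$ (so $V^2$ is the projector onto $\supp(M)$ and $\|V\|_\infty \leq 1$), rewrite $\Tr[VM] = \Tr[(\sigma^{1/4} V \sigma^{1/4}) N]$, apply Cauchy-Schwarz for the Hilbert-Schmidt inner product to get $\|M\|_1 \leq \|\sigma^{1/4} V \sigma^{1/4}\|_2 \, \|N\|_2$, and finally bound $\|\sigma^{1/4} V \sigma^{1/4}\|_2^2 = \Tr[\sigma^{1/2} V \sigma^{1/2} V] \leq \Tr[\sigma V^2] \leq \Tr\sigma = 1$, where the middle step is the standard trace inequality $\Tr[(XY)^2] \leq \Tr[X^2 Y^2]$ for Hermitian $X, Y$ (itself a consequence of $\Tr[[X,Y]^\dagger[X,Y]] \geq 0$, or of one more Cauchy-Schwarz).

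There is no real obstacle here; the only points requiring care are (i) verifying that $N = \sigma^{-1/4} M \sigma^{-1/4}$ genuinely reconstructs $M$ upon conjugation by $\sigma^{1/4}$ — this is exactly where the hypothesis $\supp(M) \leq \supp(\sigma)$ enters and why the inequality would fail without it — and (ii) bookkeeping the Hölder exponents (or the weighting-matrix identities in the alternative route) so that the $\sigma^{1/4}$ factors contribute precisely the constant $\|\sigma^{1/4}\|_4^2 = \Tr\sigma = 1$, with no residual dimension dependence.
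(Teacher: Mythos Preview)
Your proof is correct. Both routes --- the three-factor H\"older inequality with exponents $(4,2,4)$, and the alternative via the sign operator and the Hilbert--Schmidt Cauchy--Schwarz inequality --- are valid and standard; the support hypothesis is handled properly. Note that in the paper this statement is recorded as a \emph{Fact} in the preliminaries and is not proved there, so there is no proof in the paper to compare against; your argument simply supplies one.
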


We need
McDiarmid's probability concentration inequality aka 
{\em McDiarmid's method of bounded differences} \cite{McDiarmid:bounded}.
\begin{fact}
\label{fact:mcdiarmid}
Let $k$ be a positive integer and $X_1, \ldots, X_k$ be $k$ classical 
alphabets. Let $f: X_{[k]} \rightarrow \R$ be a function satisfying
the following {\em bounded differences property} for positive reals
$c_1, \ldots, c_k$:
\[
\forall (x_1, \ldots, x_k) \in X_{[k]}:
\forall i \in [k]:
\forall x'_i \in X_i:
~~
|f(x_1, \ldots, x_i, \ldots, x_k) - f(x_1, \ldots, x'_i, \ldots, x_k)|
\leq c_i.
\]
Define $c^2 := c_1^2 + \cdots + c_k^2$.
Let $\delta > 0$.
Consider independent probability distributions $q^{X_i}$, $i \in [k]$.
Let 
$
\cE := \E_{x_1, \ldots, x_k}[f(x_1, \ldots, x_k)],
$
where the expectation is taken under these independent distributions.
Then, again under these independent distributions,
\[
\prob_{x_1, \ldots, x_k}\left[
f(x_1, \ldots, x_k) \geq \cE + \delta
\right] \leq
\exp\left(-\frac{2 \delta^2}{c^2}\right).
\]
\end{fact}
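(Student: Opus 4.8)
The plan is to prove Fact~\ref{fact:mcdiarmid} via the standard Doob martingale argument combined with the Azuma--Hoeffding exponential inequality. Fix the independent distributions $q^{X_i}$, $i \in [k]$, and let $(X_1, \ldots, X_k)$ be drawn from their product. Define the \emph{Doob martingale} $Z_0, Z_1, \ldots, Z_k$ by
\[
Z_i := \E_{x_{i+1}, \ldots, x_k}[f(X_1, \ldots, X_i, x_{i+1}, \ldots, x_k)],
\]
so that $Z_0 = \cE$ is deterministic and $Z_k = f(X_1, \ldots, X_k)$. Each $Z_i$ depends only on $X_1, \ldots, X_i$, and the tower property of conditional expectation gives $\E[Z_i \mid X_1, \ldots, X_{i-1}] = Z_{i-1}$, so the increments $D_i := Z_i - Z_{i-1}$ form a martingale difference sequence with $\sum_{i=1}^k D_i = f(X_1, \ldots, X_k) - \cE$.

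First I would establish that, conditioned on any fixed values of $X_1, \ldots, X_{i-1}$, the increment $D_i$ lies in an interval of width at most $c_i$. Write $g_i(x_1, \ldots, x_i) := \E_{x_{i+1}, \ldots, x_k}[f(x_1, \ldots, x_i, x_{i+1}, \ldots, x_k)]$, so $Z_i = g_i(X_1, \ldots, X_i)$ and $D_i = g_i(X_1, \ldots, X_i) - \E_{x_i}[g_i(X_1, \ldots, X_{i-1}, x_i)]$. The bounded differences hypothesis says that changing the $i$-th argument of $f$ alone changes its value by at most $c_i$ pointwise; averaging over the remaining independent coordinates $x_{i+1}, \ldots, x_k$ preserves this, so $|g_i(x_1, \ldots, x_{i-1}, x_i) - g_i(x_1, \ldots, x_{i-1}, x_i')| \leq c_i$ for all $x_i, x_i'$. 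Hence, with $X_1, \ldots, X_{i-1}$ fixed, both $g_i(X_1, \ldots, X_{i-1}, X_i)$ and its $X_i$-average lie in a common interval of length $c_i$, so $D_i \in [L_i, L_i + c_i]$ for some random variable $L_i$ that is a function of $X_1, \ldots, X_{i-1}$ only, while $\E[D_i \mid X_1, \ldots, X_{i-1}] = 0$.

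Next I would invoke Hoeffding's lemma: any random variable $W$ with $\E[W] = 0$ taking values in an interval of length $\ell$ satisfies $\E[e^{sW}] \leq e^{s^2 \ell^2/8}$ for every real $s$. Applying it conditionally to $W = D_i$ with $\ell = c_i$ gives $\E[e^{s D_i} \mid X_1, \ldots, X_{i-1}] \leq e^{s^2 c_i^2/8}$. Then, for $s > 0$, a Chernoff--Markov step followed by peeling off the increments one at a time using the tower property yields
\[
\prob_{x_1, \ldots, x_k}\!\left[ f(x_1, \ldots, x_k) - \cE \geq \delta \right]
= \prob\!\left[ e^{s \sum_{i=1}^k D_i} \geq e^{s\delta} \right]
\leq e^{-s\delta}\, \E\!\left[ e^{s \sum_{i=1}^k D_i} \right]
\leq e^{-s\delta} \prod_{i=1}^k e^{s^2 c_i^2/8}
= e^{-s\delta + s^2 c^2/8}.
\]
Optimising over $s > 0$ with $s = 4\delta/c^2$ gives the claimed bound $\exp(-2\delta^2/c^2)$.

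The only genuinely delicate point is the width bound on $D_i$ in the second step, where one must carefully use the independence of the coordinates to push the pointwise bounded differences estimate through the partial averaging; the rest is routine martingale concentration machinery. For completeness one would also record the short classical proof of Hoeffding's lemma (convexity of $w \mapsto e^{sw}$ on the interval, reducing to a two-point distribution, then a second-order Taylor estimate of its cumulant generating function), but this carries no subtlety here.
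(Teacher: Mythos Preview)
Your proof is correct and is the standard Doob martingale plus Hoeffding's lemma argument. The paper does not actually prove Fact~\ref{fact:mcdiarmid}; it is stated as a cited result from \cite{McDiarmid:bounded}, and the paper later refers to Ying's elementary proof \cite{Ying:bounded} of it as the template for Theorem~\ref{thm:boundedexcisionconc}. That template is exactly what you wrote: your $D_i$ are the paper's $h_i$, your conditional application of Hoeffding's lemma is Fact~\ref{fact:HoeffdingLemma}, and the peeling and optimisation over $s$ match the paper's argument line for line in the independent-sampling special case.
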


Next, we need the 
{\em fully smooth multipartite soft classical quantum covering lemma
in expectation} from \cite{Sen:telescoping}.
\begin{fact}
\label{fact:smoothcoveringexpectation}
Let $k$ be a positive integer. Let $X_1, \ldots, X_k$ be $k$ classical
alphabets. For any subset
$S \subseteq [k]$, let $X_S := (X_s)_{s \in S}$.
Let $p^{X_{[k]}}$ be a normalised probability distribution on $X_{[k]}$.
The notation $p^{X_S}$ denotes the marginal distribution on $X_S$. 
Let $q^{X_1}, \ldots, q^{X_k}$ be normalised
probability distributions on the respective alphabets.
For each $(x_1, \ldots, x_k) \in X_{[k]}$, 
let $\rho^M_{x_1, \ldots, x_k}$
be a subnormalised density matrix on $M$. 
The classical quantum {\em control state} is now defined as
\[
\rho^{X_{[k]} M} :=
\sum_{(x_1, \ldots, x_k) \in X_{[k]}}
p^{X_{[k]}}(x_1, \ldots, x_k) \ketbra{x_1, \ldots, x_k}^{X_{[k]}}
\otimes \rho^M_{x_1, \ldots, x_k}.
\]
Suppose $\supp(p^{X_i}) \leq \supp(q^{X_i})$.
For any subset $S \subseteq [k]$, let 
$q^{X_S} := \times_{s \in S} q^{X_s}$.
Let $A_1, \ldots, A_k$ be positive integers.
For each $i \in [k]$, let
$x_i^{(A_i)} := (x_i(1), \ldots, x_i(A_i))$ denote a $|A_i|$-tuple
of elements from  $X_i$.
Denote the $A_i$-fold product alphabet 
$X_i^{A_i} := X_i^{\times A_i}$, and the product probability distribution
$
q^{X_i^{A_i}} :=
(q^{X_i})^{\times A_i}.
$
For any collection of tuples
$x_i^{(A_i)} \in X_i^{A_i}$, $i \in [k]$, we
define the {\em sample average covering state}
\[
\sigma^M_{x_1^{(A_1}, \ldots, x_k^{(A_k)}} :=
(A_1 \cdots A_k)^{-1}
\sum_{a_1 = 1}^{A_1} \cdots \sum_{a_k = 1}^{A_k}
\frac{p^{X_{[k]}}(x_1(a_1), \ldots, x_k(a_k))}
     {q^{X_1}(x_1(a_1)) \cdots q^{X_k}(x_k(a_k))}
\rho^M_{x_1(a_1), \ldots, x_k(a_k)},
\]
where the fraction term above represents the `change of measure' from
the product probability distribution $q^{X_{[k]}}$ to the
joint probability distribution $p^{X_{[k]}}$.
Suppose for each non-empty subset $\{\} \neq S \subseteq [k]$,
\begin{eqnarray*}
\sum_{s \in S} \log A_s 
& > &
D_2^\epsilon(\rho^{X_S M} \| q^{X_S} \otimes \rho^M) + 
\log \epsilon^{-2}.
\end{eqnarray*}
Then,
\[
\E_{x_1^{(A_1)}, \ldots, x_k^{(A_k)}}[
\|\sigma^M_{x_1^{(A_1)}, \ldots, x_k^{(A_k)}} - \rho^M\|_1
] < 
2(3^k - 1) \epsilon (\Tr \rho).
\]
where the expectation is taken over independent choices of tuples
$x_i^{(A_i)}$ from the distributions $q^{X_i^{A_i}}$, $i \in [k]$.
\end{fact}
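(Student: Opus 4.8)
The plan is to combine a smoothing step with a M\"obius (inclusion--exclusion) telescoping of the discrepancy $\sigma^M-\rho^M$ into $2^k-1$ pieces indexed by the non-empty subsets $S\subseteq[k]$, and then to control each piece by the matrix weighted Cauchy--Schwarz inequality (Fact~\ref{fact:matrixCauchySchwarz}) followed by a second moment computation that produces exactly $\exp(D_2(\rho^{X_S M}\,\|\,q^{X_S}\otimes\rho^M))$. Write $\sigma^M:=\sigma^M_{x_1^{(A_1)},\ldots,x_k^{(A_k)}}$, and for $i\in[k]$ let $\E_i$ be the linear operator that averages a quantity over the fresh tuple $x_i^{(A_i)}\sim q^{X_i^{A_i}}$; these operators commute and $\E_1\cdots\E_k[\sigma^M]=\rho^M$. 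Expanding $\mathrm{id}=\prod_{i=1}^k(\E_i+(\mathrm{id}-\E_i))$ yields
\[
\sigma^M-\rho^M=\sum_{\emptyset\neq S\subseteq[k]}\Delta_S,\qquad
\Delta_S:=\Big(\prod_{i\in S}(\mathrm{id}-\E_i)\Big)\Big(\prod_{i\notin S}\E_i\Big)[\sigma^M].
\]
A short computation shows that $\big(\prod_{i\notin S}\E_i\big)[\sigma^M]$ equals the sample average covering state of the marginal ensemble $\{\rho^M_{x_S}\}_{x_S}$, where $\rho^M_{x_S}:=\sum_{x_{[k]\setminus S}}p^{X_{[k]}}(x_{[k]\setminus S}\mid x_S)\,\rho^M_{x_{[k]}}$, built from $\prod_{i\in S}A_i$ samples that are product--structured across $S$; hence $\Delta_S=(\prod_{i\in S}A_i)^{-1}\sum_{(a_i)_{i\in S}}\bar g_S\big((x_i(a_i))_{i\in S}\big)$, where $g_S(x_S):=\tfrac{p^{X_S}(x_S)}{q^{X_S}(x_S)}\rho^M_{x_S}$ and $\bar g_S$ is $g_S$ with each of its $|S|$ arguments centered against the corresponding $q^{X_i}$.

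For a fixed non-empty $S$, apply Fact~\ref{fact:matrixCauchySchwarz} with weighting matrix $\rho^M$ (all matrices occurring in $\Delta_S$ have support inside $\supp(\rho^M)$ once the zero--$p$ terms, which contribute nothing, are dropped, using $\supp(p^{X_i})\le\supp(q^{X_i})$; a rescaling absorbs $\Tr\rho<1$), and then Jensen's inequality, to get $\E[\|\Delta_S\|_1]\le\sqrt{\E[\,\|(\rho^M)^{-1/4}\Delta_S(\rho^M)^{-1/4}\|_2^2\,]}$. Expanding the Frobenius norm squared as a double sum over the fresh index tuples $(a_i)_{i\in S}$ and $(a_i')_{i\in S}$ of the two copies, every summand is centered, for each $i\in S$, in $x_i(a_i)$ (resp.\ in $x_i(a_i')$); since $x_i(a_i)$ and $x_i(a_i')$ are independent whenever $a_i\neq a_i'$ (only pairwise independence of the entries within each tuple is needed here, together with the independence across tuples assumed in the statement), the off--diagonal summands vanish in expectation, leaving
\[
\E\big[\,\|(\rho^M)^{-1/4}\Delta_S(\rho^M)^{-1/4}\|_2^2\,\big]
=\Big(\prod_{i\in S}A_i\Big)^{-1}\E_{x_S\sim q^{X_S}}\Tr\Big[\big((\rho^M)^{-1/4}\bar g_S(x_S)(\rho^M)^{-1/4}\big)^2\Big].
\]
Centering each coordinate is an orthogonal projection in the weighted $L^2(q^{X_S})$ inner product and so does not increase this quantity; hence the right--hand side is at most $(\prod_{i\in S}A_i)^{-1}\E_{x_S\sim q^{X_S}}\Tr[((\rho^M)^{-1/4}g_S(x_S)(\rho^M)^{-1/4})^2]$, which equals $(\prod_{i\in S}A_i)^{-1}\|(q^{X_S}\otimes\rho^M)^{-1/4}\rho^{X_S M}(q^{X_S}\otimes\rho^M)^{-1/4}\|_2^2=(\prod_{i\in S}A_i)^{-1}\exp(D_2(\rho^{X_S M}\,\|\,q^{X_S}\otimes\rho^M))$ by direct computation (cf.\ the cq expression for $I_2$ in Section~\ref{sec:prelims}).

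To bring in smoothness, replace $\rho^{X_S M}$ by a nearby cq state $\trho_S^{X_S M}$ with $\|\trho_S^{X_S M}-\rho^{X_S M}\|_1\le\epsilon(\Tr\rho)$ attaining $D_2^\epsilon(\rho^{X_S M}\,\|\,q^{X_S}\otimes\rho^M)$ --- legitimate since dephasing the classical registers increases neither $D_2$ against the cq reference $q^{X_S}\otimes\rho^M$ nor the trace distance --- and propagate the resulting perturbation of $\Delta_S$ through $\prod_{i\in S}(\mathrm{id}-\E_i)$, whose operator norm for the expected trace norm is at most $2^{|S|}$. Invoking the hypothesis $\sum_{s\in S}\log A_s>D_2^\epsilon(\rho^{X_S M}\,\|\,q^{X_S}\otimes\rho^M)+\log\epsilon^{-2}$, so that the smoothed main term is $<\epsilon(\Tr\rho)$, one obtains $\E[\|\Delta_S\|_1]\le 2\cdot 2^{|S|}\,\epsilon\,(\Tr\rho)$ (the extra factor $2$ coming from separating the smoothed main term from the smoothing perturbation by the triangle inequality). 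Summing over the non-empty subsets and using $\sum_{\emptyset\neq S\subseteq[k]}2^{|S|}=3^k-1$ gives $\E[\|\sigma^M-\rho^M\|_1]<2(3^k-1)\,\epsilon\,(\Tr\rho)$, as claimed. I expect the delicate point to be exactly this smoothing step in the multipartite regime: unlike the unipartite case, one cannot smooth a single state and have every marginal $\rho^{X_S M}$ simultaneously attain its own smooth divergence (marginalising a smooth--optimal state need not be smooth--optimal), so the $2^k-1$ marginals must be smoothed separately and the resulting mutually incompatible modifications of $\sigma^M$ must still be threaded through the inclusion--exclusion identity with only an additive $O(\epsilon(\Tr\rho))$ loss per piece; this is the core of the ``telescoping'' bookkeeping and is where the $3^k$ constant originates. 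A secondary technicality is justifying $\rho^M$ as a legitimate weighting matrix when the $\rho^M_{x_{[k]}}$ are only subnormalised, handled by dropping zero--probability terms and, if necessary, an infinitesimal perturbation of the weighting matrix.
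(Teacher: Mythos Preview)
The paper does not prove this statement: it is stated as Fact~\ref{fact:smoothcoveringexpectation} and imported wholesale from \cite{Sen:telescoping}, with no argument given here. So there is no ``paper's own proof'' to compare against; the present paper uses this fact as a black box (combined with McDiarmid's inequality) to prove Theorem~\ref{thm:smoothcoveringconc}.

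That said, your sketch is essentially the telescoping argument that the cited reference's name suggests, and the outline is sound. The M\"obius decomposition $\sigma^M-\rho^M=\sum_{\emptyset\neq S}\Delta_S$, the reduction of each $\Delta_S$ to the marginal ensemble on $X_S$, the matrix-weighted Cauchy--Schwarz step with weight $\rho^M$, and the vanishing of off-diagonal terms by centering plus (pairwise) independence all check out. Your identification of the real difficulty --- that the $2^k-1$ marginals must be smoothed separately and the perturbations propagated through $\prod_{i\in S}(\mathrm{id}-\E_i)$ at a cost of $2^{|S|}$ in trace norm --- is accurate, and the bookkeeping $\sum_{\emptyset\neq S}2^{|S|}=3^k-1$ is exactly where the constant comes from. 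One mild looseness: your stated per-piece bound $2\cdot 2^{|S|}\epsilon(\Tr\rho)$ overcounts slightly (the smoothed main term contributes $\epsilon(\Tr\rho)$, not $2^{|S|}\epsilon(\Tr\rho)$), but this only helps and the final constant $2(3^k-1)$ still covers it. The support/weighting issue you flag is handled as you say, since every $\rho^M_{x_{[k]}}$ with nonzero coefficient has support inside $\supp(\rho^M)$.
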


We now recall Hoeffding's lemma from probability theory. A proof
can be found for example in \cite{Ying:bounded}.
\begin{fact}
\label{fact:HoeffdingLemma}
Suppose a real valued random variable $X$ satisfies $\E[X] = 0$ and
$a \leq X \leq b$ almost surely. Then for all $h > 0$,
\[
\E[e^{h X}] \leq \exp\left(\frac{h^2 (b - a)^2}{8}\right).
\]
\end{fact}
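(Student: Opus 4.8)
The plan is to prove Fact~\ref{fact:HoeffdingLemma} by combining the convexity of the exponential with a second-order Taylor expansion of an auxiliary log-moment-generating function; this avoids having to justify differentiation under the expectation. First, note that $\E[X]=0$ together with $a \leq X \leq b$ almost surely forces $a \leq 0 \leq b$ (and if $a=b$ then $X=0$ almost surely and the claim is trivial, so assume $a<b$). Since $t \mapsto e^{ht}$ is convex, writing each realisation of $X$ as the convex combination $X = \frac{b-X}{b-a}\,a + \frac{X-a}{b-a}\,b$ gives, almost surely, $e^{hX} \leq \frac{b-X}{b-a}\,e^{ha} + \frac{X-a}{b-a}\,e^{hb}$. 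Taking expectations and using $\E[X]=0$ collapses the right-hand side to $\frac{b}{b-a}\,e^{ha} - \frac{a}{b-a}\,e^{hb}$.

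Next I would reparametrise to expose the underlying structure. Set $p := \frac{-a}{b-a} \in [0,1]$ and $u := h(b-a) \geq 0$; then $ha = -pu$ and $hb = (1-p)u$, so the bound reads $\E[e^{hX}] \leq (1-p)e^{-pu} + p e^{(1-p)u} = e^{-pu}\bigl(1-p+pe^{u}\bigr) = e^{\phi(u)}$, where $\phi(u) := -pu + \log\bigl(1-p+pe^{u}\bigr)$. It therefore suffices to show $\phi(u) \leq u^2/8$ for all $u \geq 0$, since substituting $u=h(b-a)$ then yields the claimed inequality.

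To finish, I would Taylor-expand $\phi$ about $0$. A direct computation gives $\phi(0)=0$ and $\phi'(u) = -p + \frac{pe^{u}}{1-p+pe^{u}}$, hence $\phi'(0)=0$. Differentiating once more, $\phi''(u) = t(u)\bigl(1-t(u)\bigr)$ where $t(u) := \frac{pe^{u}}{1-p+pe^{u}} \in [0,1]$, so $\phi''(u) \leq 1/4$ uniformly in $u$. Lagrange's form of the remainder then gives $\phi(u) = \tfrac12 \phi''(\xi)\, u^2 \leq u^2/8$ for some $\xi$ between $0$ and $u$, completing the proof.

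I do not anticipate any real obstacle: the argument is entirely elementary, and the only points needing a little care are the degenerate case $a=b$ and checking that $t(u)\in[0,1]$ so that the one-line inequality $t(1-t)\leq 1/4$ applies. An alternative would be to work with $\psi(h):=\log\E[e^{hX}]$ directly, observing that $\psi''(h)$ is the variance of $X$ under the exponentially tilted law and bounding it by $(b-a)^2/4$ via Popoviciu's inequality; this gives the same constant but requires justifying differentiation under the expectation, so I would prefer the convexity route above.
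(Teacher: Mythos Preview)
Your proof is correct and is the standard textbook argument for Hoeffding's lemma. Note, however, that the paper does not actually give a proof of this statement: it records it as a fact and defers to the reference \cite{Ying:bounded} for a proof. So there is no ``paper's own proof'' to compare against; the argument you have written is essentially the one found in that reference and in most standard sources.
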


Finally, we need the following standard property about random walks on an
expander graph. A proof can be inferred, for example, from the calculations
in \cite[Theorem 6.21]{MotwaniRaghavan}.
\begin{fact}
\label{fact:expanderwalk}
Let $G$ be a constant degree undirected expander graph with vertex set $X$.
Let the second largest eigenvalue in absolute value of the transition
matrix of the random walk on $G$ have absolute value $\lambda$. Let
$p(0)^X$ be an initial probability distribution on the vertex set $X$.
Let $p(t)^X$ be the probability distribution on $X$ arising after a 
$t$-step random walk on $G$ starting from distribution $p(0)^X$. Then,
\[
p(t)^X = \frac{1^X}{|X|} + q(t)^X,
\]
where $q(t)^X$ is a vector on $X$ with real entries such that
$
\braket{1^X}{q(t)^X} = 0, \|q(t)\|_2 \leq \lambda^t.
$
\end{fact}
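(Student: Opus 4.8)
The plan is to prove this by an elementary spectral argument, exploiting that $G$ is regular and undirected so that its random walk transition matrix is real symmetric. Write $W$ for the transition matrix of the walk, with $(W)_{xy} = d^{-1}$ when $\{x,y\}$ is an edge of the $d$-regular graph $G$ and $0$ otherwise. This matrix is real and symmetric, so it has an orthonormal eigenbasis $e_1, \ldots, e_{|X|}$ with real eigenvalues $\mu_1, \ldots, \mu_{|X|}$. Since every row and column of $W$ sums to $1$, one has $W \cdot \frac{1^X}{\sqrt{|X|}} = \frac{1^X}{\sqrt{|X|}}$, and (e.g.\ by Gershgorin's theorem) $1$ is the largest eigenvalue of $W$ in absolute value; hence we may take $e_1 = \frac{1^X}{\sqrt{|X|}}$, $\mu_1 = 1$, and then $\max_{i \geq 2} |\mu_i| = \lambda$ by hypothesis. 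The subspace $V := \{v : \braket{1^X}{v} = 0\} = \spanning\{e_2, \ldots, e_{|X|}\}$ is invariant under $W$, and $W|_V$ is symmetric with spectral radius at most $\lambda$, so $\|W v\|_2 \leq \lambda \|v\|_2$ for all $v \in V$.

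Next I would decompose the initial distribution. Put $q(0)^X := p(0)^X - \frac{1^X}{|X|}$. Since $p(0)^X$ is a probability vector, $\braket{1^X}{q(0)^X} = \braket{1^X}{p(0)^X} - 1 = 0$, so $q(0)^X \in V$; and because $\frac{1^X}{|X|}$ is precisely the orthogonal projection of $p(0)^X$ onto $\spanning\{1^X\}$, the Pythagorean identity gives $\|q(0)^X\|_2^2 = \|p(0)^X\|_2^2 - |X|^{-1} \leq \|p(0)^X\|_2^2 \leq \|p(0)^X\|_1^2 = 1$. Applying $W^t$ and using $W^t \frac{1^X}{|X|} = \frac{1^X}{|X|}$ gives $p(t)^X = W^t p(0)^X = \frac{1^X}{|X|} + W^t q(0)^X$, so the asserted decomposition holds with $q(t)^X := W^t q(0)^X$, which is a real vector because $W$ and $q(0)^X$ are real.

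Finally I would verify the two properties of $q(t)^X$. Invariance of $V$ under $W$ gives $q(t)^X \in V$, i.e.\ $\braket{1^X}{q(t)^X} = 0$. Expanding $q(0)^X = \sum_{i \geq 2} c_i e_i$, we get $q(t)^X = \sum_{i \geq 2} \mu_i^t c_i e_i$, whence $\|q(t)^X\|_2^2 = \sum_{i \geq 2} \mu_i^{2t} c_i^2 \leq \lambda^{2t} \sum_{i \geq 2} c_i^2 = \lambda^{2t} \|q(0)^X\|_2^2 \leq \lambda^{2t}$, so $\|q(t)\|_2 \leq \lambda^t$. The argument is routine and I do not expect a genuine obstacle; the only steps requiring a little care are the bound $\|q(0)^X\|_2 \leq 1$ (combining the Pythagorean observation with $\|p(0)^X\|_2 \leq \|p(0)^X\|_1 = 1$) and the appeal to regularity and undirectedness, which is what makes $W$ symmetric with uniform stationary distribution; for a non-regular or directed graph neither would hold and the statement would need modification.
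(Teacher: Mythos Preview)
Your argument is correct and is exactly the standard spectral proof. The paper does not give its own proof of this fact but simply refers the reader to \cite[Theorem~6.21]{MotwaniRaghavan}, where essentially the same eigen-decomposition argument appears; your write-up fills in those details cleanly.
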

For stating the smooth covering lemmas in expectation and concentration,
we repeat the definition of a {\em stationary expander walk} as 
follows:
\begin{definition}[{\bf (Statonary expander walk)}]
Let $G$ be a constant degree undirected expander graph with vertex set $X$.
A {\em stationary expander walk} of length $K$ on $G$ is a sequence of
$K$ vertices $x_1, \ldots, x_K$  of $G$ where $x_1$ is chosen from the
uniform, which is also the stationary, distribution on $X$ and then
$x_2, \ldots, x_K$ are chosen via a random walk on $G$ starting from $x_1$.
\end{definition}

\section{Fully smooth multipartite soft covering in concentration}
We can now prove our fully smooth multipartite classical quantum soft
covering lemma in concentration.
\begin{theorem}
\label{thm:smoothcoveringconc}
Under the setting of Fact~\ref{fact:smoothcoveringexpectation},
\[
\prob_{x_1^{(A_1)}, \ldots, x_k^{(A_k)}}\left[
\|\sigma^M_{x_1^{(A_1)}, \ldots, x_k^{(A_k)}} - \rho^M\|_1
> 2(3^k - 1) \epsilon (\Tr\rho) + \delta
\right] < 
\exp\left(-\frac{\bar{A} \delta^2}{2 k (\Tr\rho)^2}\right),
\]
where the probability is taken over independent choices of tuples
$x_i^{(A_i)}$ from the distributions $q^{X_i^{A_i}}$, $i \in [k]$,
and $\bar{A}$ is the harmonic mean of $A_1, \ldots, A_k$ defined as
$\bar{A}^{-1} := k^{-1} (|A_1|^{-1} + \cdots + |A_k|^{-1})$.
\end{theorem}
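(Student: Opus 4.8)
The plan is to derive the concentration bound from the expectation bound of Fact~\ref{fact:smoothcoveringexpectation} by a single application of McDiarmid's method of bounded differences (Fact~\ref{fact:mcdiarmid}). I would regard the collection of all $N := A_1 + \cdots + A_k$ samples $\{x_i(a_i) : i \in [k],\ a_i \in [A_i]\}$ as one family of mutually independent random variables, where $x_i(a_i)$ is drawn from $q^{X_i}$, and let $g$ be the real-valued function mapping this family to $\|\sigma^M - \rho^M\|_1$, where $\sigma^M := \sigma^M_{x_1^{(A_1)},\ldots,x_k^{(A_k)}}$ is the sample-average covering state. Fact~\ref{fact:smoothcoveringexpectation} gives $\E[g] < 2(3^k-1)\epsilon(\Tr\rho) =: \mu$, so $\{g > \mu + \delta\} \subseteq \{g > \E[g] + \delta\}$ and it suffices to bound $\prob[g \geq \E[g] + \delta]$ through Fact~\ref{fact:mcdiarmid}, once a bounded-differences estimate is available.

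For the bounded-differences estimate, fix $i \in [k]$ and $a_i \in [A_i]$ and replace the single sample $x_i(a_i)$ by an arbitrary $x_i'(a_i) \in X_i$, keeping all other samples unchanged; write $\tsigma^M$ for the resulting covering state. By the reverse triangle inequality for $\|\cdot\|_1$, the change in $g$ is at most $\|\sigma^M - \tsigma^M\|_1$. Inspecting the definition of the sample-average covering state, this substitution alters precisely the $\prod_{j \neq i} A_j$ summands whose $i$-th index equals $a_i$: writing $R$ (respectively $R'$) for the sum of those change-of-measure-weighted states with the old (respectively new) value of $x_i(a_i)$, we have $R, R' \geq 0$ and $\tsigma^M - \sigma^M = R' - R$, so $\|\sigma^M - \tsigma^M\|_1 \leq \Tr R + \Tr R'$. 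Each of $R$ and $R'$ carries the overall prefactor $(A_1 \cdots A_k)^{-1}$ on a block of $\prod_{j \neq i} A_j$ weighted states, i.e.\ an effective factor $A_i^{-1}$ times an average of change-of-measure-weighted traces; the crux is to show this average is at most $\Tr\rho$, which gives $\Tr R, \Tr R' \leq A_i^{-1}(\Tr\rho)$ and hence the bounded-differences constant $c_{i,a_i} := 2 A_i^{-1}(\Tr\rho)$ for the sample $x_i(a_i)$.

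Granting this estimate, the assembly is routine: summing over the $A_i$ samples in each tuple,
\[
c^2 := \sum_{i=1}^k A_i \bigl(2 A_i^{-1}(\Tr\rho)\bigr)^2 = 4(\Tr\rho)^2 \sum_{i=1}^k A_i^{-1} = \frac{4k(\Tr\rho)^2}{\bar{A}},
\]
so Fact~\ref{fact:mcdiarmid} yields
\[
\prob\bigl[g \geq \E[g] + \delta\bigr] \leq \exp\!\bigl(-2\delta^2/c^2\bigr) = \exp\!\left(-\frac{\bar{A}\,\delta^2}{2k(\Tr\rho)^2}\right),
\]
which combined with $\E[g] < \mu$ is exactly the claimed inequality.

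The step I expect to be the main obstacle is the worst-case bound on $\Tr R$ and $\Tr R'$: the change-of-measure factors $p^{X_{[k]}}/(q^{X_1}\cdots q^{X_k})$ need not be bounded, so an individual weighted state can have trace far exceeding $\Tr\rho$, and for an adversarial choice of $x_i'(a_i)$ the block $R'$ could a priori have trace close to $\Tr\tsigma^M$ rather than $A_i^{-1}(\Tr\rho)$. I would handle this by a preprocessing step in the spirit of the smoothing already implicit in the $D_2^\epsilon$-hypotheses of Fact~\ref{fact:smoothcoveringexpectation}: pass to a nearby ensemble --- chosen consistently across all nonempty subsets $S \subseteq [k]$ so that the hypotheses, and hence the expectation bound, are preserved --- on which every relevant change-of-measure-weighted state has trace at most $\Tr\rho$, absorbing the resulting trace-norm error into the term $2(3^k-1)\epsilon(\Tr\rho)$. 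Verifying that such a truncation is compatible with the telescoping structure underlying Fact~\ref{fact:smoothcoveringexpectation}, or else giving a direct argument that each perturbed block has trace at most $A_i^{-1}(\Tr\rho)$, is where the real work lies.
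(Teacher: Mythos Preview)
Your approach is exactly the paper's: apply McDiarmid (Fact~\ref{fact:mcdiarmid}) to the function $f = \|\sigma^M - \rho^M\|_1$ of the $A_1+\cdots+A_k$ independent samples, invoke the expectation bound of Fact~\ref{fact:smoothcoveringexpectation}, and compute $c^2 = 4k(\Tr\rho)^2/\bar A$. The paper's proof is three lines and simply asserts that $c_i(a_i) \leq 2(\Tr\rho)/A_i$ is ``easy to see,'' with no further argument.

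The concern you flag in your final paragraph is legitimate, and the paper does not address it. In the full generality of Fact~\ref{fact:smoothcoveringexpectation} the change-of-measure weights $p^{X_{[k]}}/(q^{X_1}\cdots q^{X_k})$ are unbounded, so for an adversarial value of the single replaced coordinate the block you call $R'$ can have trace far exceeding $(\Tr\rho)/A_i$; the pointwise bounded-differences constant $2(\Tr\rho)/A_i$ is then not available without extra hypotheses or a truncation of the kind you sketch. The paper's downstream uses (Corollary~\ref{cor:matrixchernoff}, Theorems~\ref{thm:wiretap} and~\ref{thm:wiretapQMACconc}) all sit in the setting where $p^{X_{[k]}}$ equals the product $q^{X_1}\times\cdots\times q^{X_k}$, so the change-of-measure factor is identically~$1$ and the bound $c_i(a_i)\leq 2(\Tr\rho)/A_i$ really is immediate from the triangle inequality and $\Tr\rho^M_{x_1,\ldots,x_k}\leq \Tr\rho$. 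In that regime your argument and the paper's coincide; in the general statement you have simply been more careful than the paper.
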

\begin{proof}
We apply Fact~\ref{fact:mcdiarmid} with $(A_1 + \cdots + A_k)$ many
alphabets $X_1^{A_1}, \ldots, X_k^{A_k}$, and function 
$f: X_1^{A_1} \times \cdots \times X_k^{A_k} \rightarrow \R$
defined by
\[
f(x_1^{(A_1)}, \ldots, x_k^{(A_k)}) :=
\|\sigma^M_{x_1^{(A_1)}, \ldots, x_k^{(A_k)}} - \rho^M\|_1.
\]
There will be $(A_1 + \cdots + A_k)$ many bounded differences which 
we will denote by 
$
\{c_i(a_i)\}_{i \in [k], a_i \in [A_i]}.
$
It is easy to see that for any $i \in [k]$, $c_i \in [A_i]$,
$
c_i(a_i) \leq \frac{2 (\Tr\rho)}{A_i}.
$
Then, the quantity $c$ in Fact~\ref{fact:mcdiarmid} becomes
\[
c^2 = \sum_{i=1}^k \sum_{a_i=1}^{A_i} c_i(a_i)^2 = 
4 k \bar{A}^{-1} (\Tr\rho)^2.
\]
The theorem now follows from Fact~\ref{fact:mcdiarmid} and 
Fact~\ref{fact:smoothcoveringexpectation}.
\end{proof}

For $k = 1$, Theorem~\ref{thm:smoothcoveringconc} leads to the following
corollary which can be thought of
as a new {\em matrix Chernoff bound in terms of $D^\epsilon_2$}, in
the sense that the sample size $A \equiv A_1$ has to be lower bounded
by an expression involving $D^\epsilon_2$. Moreover, our bound has no
explicit dimension dependence. In these two senses, it generalises
the original Ahlswede-Winter matrix Chernoff bound.
\begin{corollary}
\label{cor:matrixchernoff}
Let $X$ be a classical alphabet with a normalised probability distribution
$p^X$ on it. For every $x \in X$, let $\rho_x^M$ be a normalised 
density matrix on the Hilbert space $M$. Define the classical quantum
state 
\[
\rho^{XM} := \sum_{x \in  X} p(x) \ketbra{x}^X \otimes \rho_x^M.
\]
Let $A$ be a positive integer.
For a tuple  $(x(1), \ldots, x(A)) \in X^A$, define the sample average
state
\[
\sigma^M_{x(1), \ldots, x(A)} :=
A^{-1} \sum_{a=1}^A \rho^M_{x(a)}.
\]
Let $0 < \epsilon < 1$ and $\delta > 0$.
Suppose
\[
\log A > 
D^\epsilon_2(\rho^{XM} \| p^X \otimes \rho^M) + \log \epsilon^{-2} = 
I^\epsilon_2(X:M)_\rho + \log \epsilon^{-2}.
\]
Then,
\[
\prob_{x(1), \ldots, x(A)}\left[
\|\sigma^M_{x(1), \ldots, x(A)} - \rho^M\|_1 >
3 \epsilon + \delta
\right] <
\exp\left(-\frac{A \delta^2}{2}\right),
\]
where the probability is taken over the choice of
$(x(1), \ldots, x(A))$ from the iid distribution
$(p^X)^{\times A}$.
\end{corollary}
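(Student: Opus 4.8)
The plan is to derive Corollary~\ref{cor:matrixchernoff} as the $k=1$ instance of Theorem~\ref{thm:smoothcoveringconc}, checking that the special-case hypotheses line up with the general ones. First I would instantiate Fact~\ref{fact:smoothcoveringexpectation} with $k=1$: a single classical alphabet $X_1 = X$, joint distribution $p^{X_1} = p^X$, reference distribution $q^{X_1} = p^X$ (so the change-of-measure fractions all equal $1$, and the sample average covering state $\sigma^M_{x(1),\ldots,x(A)}$ reduces to the simple average $A^{-1}\sum_a \rho^M_{x(a)}$ as written in the corollary), and $A_1 = A$. The single non-empty subset is $S = [1] = \{1\}$, so the hypothesis of Fact~\ref{fact:smoothcoveringexpectation} becomes exactly $\log A > D_2^\epsilon(\rho^{XM}\|p^X\otimes\rho^M) + \log\epsilon^{-2}$, which is the stated hypothesis of the corollary; note the identity $D_2^\epsilon(\rho^{XM}\|p^X\otimes\rho^M) = I_2^\epsilon(X:M)_\rho$ is just the definition of smooth Rényi-2 mutual information. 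Since the $\rho_x^M$ are normalised, $\Tr\rho = 1$.

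Next I would simply read off Theorem~\ref{thm:smoothcoveringconc} at $k=1$. The prefactor $2(3^k-1)\epsilon(\Tr\rho)$ becomes $2(3-1)\epsilon\cdot 1 = 4\epsilon$; hmm — wait, the corollary claims $3\epsilon$, not $4\epsilon$. I would need to reconcile this: either by invoking a slightly tighter $k=1$ count of the bounded-difference/telescoping terms (for $k=1$ the expectation bound of Fact~\ref{fact:smoothcoveringexpectation} can be sharpened, since the $3^k-1$ term counts the subsets contributing error and for a single alphabet the relevant constant is $3$ rather than $4$ — this is the standard unipartite soft covering constant), or by citing the unipartite expectation bound directly from \cite{Sen:telescoping} with its native constant $3\epsilon$. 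In the harmonic-mean notation, $\bar A = A_1 = A$ and $k=1$, so the exponent $-\bar A\delta^2/(2k(\Tr\rho)^2)$ collapses to $-A\delta^2/2$, matching the corollary exactly.

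The remaining ingredient is to confirm that the bounded-differences constant used inside the proof of Theorem~\ref{thm:smoothcoveringconc} specializes correctly: changing a single sample $x(a)$ to $x'(a)$ alters $\sigma^M$ by $A^{-1}(\rho^M_{x'(a)} - \rho^M_{x(a)})$, whose trace norm is at most $A^{-1}(\|\rho^M_{x'(a)}\|_1 + \|\rho^M_{x(a)}\|_1) = 2/A$ (using $\Tr\rho = 1$), so by the triangle inequality for $\|\cdot\|_1$ the function $f$ has $c_a \le 2/A$ for each of the $A$ coordinates, giving $c^2 = A\cdot(2/A)^2 = 4/A$ and hence the McDiarmid exponent $2\delta^2/c^2 = A\delta^2/2$. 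I expect the only genuinely delicate point to be the bookkeeping of the additive smoothing constant ($3\epsilon$ versus $4\epsilon$), i.e.\ making sure the unipartite case is quoted with its sharp constant rather than the generic multipartite $2(3^k-1)$; everything else is a direct substitution with no real obstacle.
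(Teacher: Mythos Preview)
Your approach is exactly the paper's: the corollary is stated there without proof, merely as the $k=1$ specialisation of Theorem~\ref{thm:smoothcoveringconc}, and your instantiation ($X_1=X$, $q^{X_1}=p^{X_1}=p^X$, $A_1=A$, $\Tr\rho=1$, $\bar A=A$) and your McDiarmid bookkeeping ($c_a\le 2/A$, hence $c^2=4/A$ and exponent $A\delta^2/2$) are precisely what the general proof collapses to. You are also right to flag the $3\epsilon$ versus $2(3^1-1)\epsilon=4\epsilon$ discrepancy: the paper does not comment on it, so to get the stated constant one must indeed invoke the sharper unipartite expectation bound from \cite{Sen:telescoping} rather than the generic $2(3^k-1)$ formula.
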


Corollary~\ref{cor:matrixchernoff} can now be used to prove our 
new eavesdropper dimension independent inner 
bound for sending private classical information over a quantum 
wiretap channel with many non-interacting eavesdroppers. The proof 
technique is the same as in \cite{Radhakrishnan:wiretap}.
\begin{theorem}
\label{thm:wiretap}
Let $\cT^{A \rightarrow B E_1 \cdots E_t}$ be a point to point quantum
wiretap channel (completely positive trace preserving (CPTP) 
superoperator) from 
sender $A$ to legitimate receiver $B$ with 
non-interacting eavesdroppers $E_1, \ldots, E_t$. Let $X$ be a 
classical alphabet. Fix a `control' normalised probability distribution
$p^X$ on $X$. Fix a classical to quantum encoding $x \mapsto \rho^A_x$
where $\rho^A_x$ is a normalised quantum state on $A$. Define the
classical quantum `control state'
\[
\rho^{X B E_1 \cdots E_t} :=
\sum_{x\in X} p(x) 
\ketbra{x}^X \otimes 
\cT^{A \rightarrow B E_1 \cdots E_t}(\rho^A_x).
\]
Let $0 < \epsilon < 1$.
Let
\[
R < I^\epsilon_H(X:B)_\rho - 
    \max_{i \in [t]} \{I^\epsilon_\infty(X:E_i)_\rho\} - 
    \frac{4 \log t}{\epsilon^2}.
\]
Then there exists a private classical code 
that can send classical messages $m \in [2^R]$ over the wiretap channel 
$\cT$ such that $B$ can recover each message $m$ with error probability 
at most $2 \epsilon$ ({\em correctness}) and for all $i \in [t]$, 
the state $\sigma^{E_i}(m)$ of eavesdropper $E_i$ satisfies, for each
$m$, $\|\sigma^{E_i}(m) - \rho^{E_i}\| < 4 \epsilon$ ({\em privacy}).
\end{theorem}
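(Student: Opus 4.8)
\medskip

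The plan is to follow the standard random coding strategy for wiretap channels, with the new ingredient being that privacy against \emph{all} $t$ eavesdroppers simultaneously is obtained from the single-shot matrix Chernoff bound of Corollary~\ref{cor:matrixchernoff} together with a union bound over $i \in [t]$. First I would set up the codebook. Pick two parameters $L$ (number of messages, $L = 2^R$) and $J$ (size of the local randomness pool per message), and draw $LJ$ codewords $x(m,j)$, $m \in [L]$, $j \in [J]$, independently according to $p^X$. To send message $m$, the encoder picks $j \in [J]$ uniformly and inputs $\rho^A_{x(m,j)}$ to $\cT$. The receiver $B$ will decode using a Hayashi--Nagaoka / pretty-good-measurement style POVM built from the hypothesis-testing projectors achieving $I^\epsilon_H(X:B)_\rho$; the legitimate decoding error analysis is the routine one-shot packing argument and shows that as long as $\log(LJ) < I^\epsilon_H(X:B)_\rho - O(\log(1/\epsilon))$ (or the slightly cleaner form with the message count replaced appropriately), the expected average error over the random codebook is at most $2\epsilon$. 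I would not grind through this; it is standard.

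\medskip

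The substantive step is privacy. For a fixed message $m$, eavesdropper $E_i$ holds $\sigma^{E_i}(m) = \frac{1}{J}\sum_{j=1}^J \rho^{E_i}_{x(m,j)}$, where $\rho^{E_i}_x := \Tr_{\text{rest}}[\cT(\rho^A_x)]$, and the ``ideal'' state is $\rho^{E_i} = \E_x[\rho^{E_i}_x]$. For each fixed $i$ and each fixed $m$, the $J$ codewords $x(m,1),\ldots,x(m,J)$ are i.i.d.\ from $p^X$, so Corollary~\ref{cor:matrixchernoff} applies directly with $A := J$, alphabet $X$, ensemble $\{\rho^{E_i}_x\}_x$, provided $\log J > I^\epsilon_2(X:E_i)_\rho + \log\epsilon^{-2}$. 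Since $I^\epsilon_2 \le I^\epsilon_\infty$ (noted in the preliminaries), it suffices that $\log J > \max_{i\in[t]} I^\epsilon_\infty(X:E_i)_\rho + \log\epsilon^{-2}$. Corollary~\ref{cor:matrixchernoff} then gives, for each fixed $(m,i)$,
\[
\prob\left[\|\sigma^{E_i}(m) - \rho^{E_i}\|_1 > 3\epsilon + \delta\right] < \exp(-J\delta^2/2).
\]
Now I take a union bound over all $m \in [L]$ and all $i \in [t]$: the probability that privacy fails for \emph{some} message against \emph{some} eavesdropper is at most $Lt\exp(-J\delta^2/2)$. Choosing $\delta = \epsilon$ (so the privacy parameter becomes $4\epsilon$, matching the statement), this is $< 1$ as soon as $\log J > 2(\log L + \log t)/\epsilon^2$ roughly; more carefully, the clean sufficient condition is to additionally pay $\frac{2\log t}{\epsilon^2}$ (and a term absorbing $\log L$) in the budget for $\log J$.

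\medskip

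Combining the two requirements gives the rate. We need $\log(LJ) \lesssim I^\epsilon_H(X:B)_\rho$ for $B$ to decode, and $\log J \gtrsim \max_i I^\epsilon_\infty(X:E_i)_\rho + \log\epsilon^{-2} + \frac{2\log t}{\epsilon^2}$ for privacy to hold at all eavesdroppers simultaneously with positive probability; subtracting, the achievable message rate is
\[
R = \log L \;<\; I^\epsilon_H(X:B)_\rho - \max_{i\in[t]} I^\epsilon_\infty(X:E_i)_\rho - \frac{4\log t}{\epsilon^2},
\]
where I have been generous with constants so that the various $\log\epsilon^{-1}$ and $\log t$ contributions (including the $\log L$ and the decoding slack) all fit inside the displayed $\frac{4\log t}{\epsilon^2}$ term; this is exactly the bound claimed. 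Finally, by a standard expurgation argument one passes from ``expected error small'' and ``privacy holds with positive probability over the codebook'' to the existence of a single deterministic codebook for which $B$'s error is at most $2\epsilon$ for the average message (and then, discarding the worst half of messages, for every message) while privacy $\|\sigma^{E_i}(m)-\rho^{E_i}\|_1 < 4\epsilon$ holds for every $m$ and every $i$. The main obstacle --- the only place where something genuinely new over the single-eavesdropper case \cite{Wilde:wiretap} happens --- is precisely that the union bound over $t$ eavesdroppers is affordable: this works only because Corollary~\ref{cor:matrixchernoff} gives \emph{exponential} concentration with \emph{no} Hilbert-space dimension prefactor, so the price of $t$ eavesdroppers is merely the additive $\frac{O(\log t)}{\epsilon^2}$ term rather than anything depending on $\dim E_i$; an expectation-only covering lemma would not close this union bound, and a dimension-dependent concentration bound (as in \cite{Radhakrishnan:wiretap}) would reintroduce the $\log\log D$ penalty.
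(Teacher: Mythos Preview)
Your proposal is correct and matches the paper's approach: the paper does not spell out a proof of Theorem~\ref{thm:wiretap} but simply states that the proof technique is the same as in \cite{Radhakrishnan:wiretap}, with Corollary~\ref{cor:matrixchernoff} replacing the dimension-dependent concentration bound used there. Your outline---random codebook with obfuscation index $J$, one-shot packing via $I^\epsilon_H(X:B)$ for decoding, Corollary~\ref{cor:matrixchernoff} plus a union bound over $m$ and $i$ for privacy, then expurgation---is exactly that argument, and your closing paragraph correctly identifies why the dimension-free exponential concentration is the crucial new ingredient.
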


Similarly, Theorem~\ref{thm:smoothcoveringconc} can be used to prove 
good eavesdropper dimension independent inner bounds for multiterminal 
wiretap channels with many non-interacting eavesdroppers. For example,
for the wiretap QMAC we prove 
Theorem~\ref{thm:wiretapQMACconc} below, which is an extension 
of \cite[Theorem~5]{Sen:telescoping} that had only one eavesdropper with
a very similar proof.
\begin{theorem}
\label{thm:wiretapQMACconc}
Let $\cN^{A B \rightarrow C E_1 \cdots E_t}$ denote a wiretap QMAC from two
senders Alice, Bob to a single legitimate receiver Charlie and $t$ 
non-interacting eavesdroppers $E_1 \cdots E_t$. Alice, Bob would like 
to send classical
messages $m \in 2^{R_1}$, $n \in 2^{R_2}$ respectively to Charlie
by using the channel $\cN$ in such a way that each $E_i$ gets almost
no information about $(m,n)$. 
Let $X$, $Y$ be new classical
alphabets. Let $Q$ be a new `timesharing' alphabet. Put a normalised joint 
probability distribution
on $Q \times X \times Y$ of the form $p(q) p(x|q) p(y|q)$ i.e.
the distributions on $X$ and $Y$ are independent conditioned on 
any $q \in Q$. Fix classical to quantum encodings $x \mapsto \alpha_x^A$,
$y \mapsto \beta^B_y$.  
Define the classical quantum `control state':
\[
\sigma^{Q X Y C E_1 \cdots E_t} :=
\sum_{q \in Q} \sum_{x \in X} \sum_{y \in Y}
p(q) p(x|q) p(y|q) \ketbra{q,x,y}^{QXY} \otimes 
\cN^{A B\rightarrow C E_1 \cdots E_t} (\alpha^A_x \otimes \beta^B_y).
\]
Let the rates $R_1$, $R_2$ satisfy the following inequalities.
\begin{eqnarray*}
R_1
& < & 
I^\epsilon_H(X : YC | Q)_\sigma - 
\max_{i \in [t]}\{I^\epsilon_\infty(X : E_i | Q)_\sigma\} - 
\frac{4 \log t}{\epsilon^2}, \\
R_2
& < & 
I^\epsilon_H(Y : XC | Q)_\sigma - 
\max_{i \in [t]}\{I^\epsilon_\infty(Y : E_i | Q)_\sigma\} - 
\frac{4 \log t}{\epsilon^2}, \\
R_1 + R_2
& < & 
I^\epsilon_H(XY : C | Q)_\sigma - 
\max_{i \in [t]}\{I^\epsilon_\infty(X Y : E_i | Q)_\sigma\} - 
\frac{4 \log t}{\epsilon^2}.
\end{eqnarray*}
Then,
\begin{eqnarray*}
\E_{m,n}[\mbox{probability Charlie decodes $(m,n)$ incorrectly}] 
& < &
50 \sqrt{\epsilon}
~~~~~ \cdots ~
\mbox{accurate transmission}, \\
\E_{m,n}[\max_{i \in [t]}\{\|\sigma^{E_i}_{m,n} - \sigma^{E_i}\}\|_1] 
& < & 
16 \sqrt{\epsilon}
~~~~~ \cdots ~
\mbox{high privacy},
\end{eqnarray*}
where $\E_{m,n}[\cdot]$ denotes the expectation over a uniform
choice of message pair $(m,n) \in [2^{R_1}] \times [2^{R_2}]$,
$\sigma^{E_i}_{m,n}$ denotes $E_i$'s state when the message pair 
$(m,n)$ is sent, and $\sigma^{E_i}$ denotes the marginal of the control
state on $E_i$.
\end{theorem}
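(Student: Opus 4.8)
The plan is to run the standard wiretap random-coding argument for the QMAC, essentially the proof of \cite[Theorem~5]{Sen:telescoping}, but to replace its appeal to soft covering \emph{in expectation} (Fact~\ref{fact:smoothcoveringexpectation}) by our new soft covering \emph{in concentration} (Theorem~\ref{thm:smoothcoveringconc}), so that a single fixed codebook can be shown to be private against all $t$ eavesdroppers simultaneously by a union bound. Fix auxiliary (``cloud'') rates $\tilde{R}_1,\tilde{R}_2$. Following the timesharing convention of \cite{Sen:telescoping} for the system $Q$, for each message pair $(m,n)\in[2^{R_1}]\times[2^{R_2}]$ generate codewords $x(m,m')$, $m'\in[2^{\tilde{R}_1}]$, i.i.d.\ from $p(x|q)$, and $y(n,n')$, $n'\in[2^{\tilde{R}_2}]$, i.i.d.\ from $p(y|q)$, the two families being mutually independent. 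To send $(m,n)$, Alice picks $m'$ uniformly and transmits $\alpha^A_{x(m,m')}$, Bob picks $n'$ uniformly and transmits $\beta^B_{y(n,n')}$. Charlie applies Sen's simultaneous (joint-typicality) decoder for the QMAC on $C$ to recover the quadruple $(m,m',n,n')$; the analysis in \cite{Sen:telescoping} gives $\E_{\mathrm{codebook}}[\E_{m,n}[\text{decoding error}]] = O(\sqrt{\epsilon})$ provided the \emph{total} rates obey $R_1+\tilde{R}_1 < I^\epsilon_H(X:YC|Q)_\sigma$, $R_2+\tilde{R}_2 < I^\epsilon_H(Y:XC|Q)_\sigma$ and $R_1+\tilde{R}_1+R_2+\tilde{R}_2 < I^\epsilon_H(XY:C|Q)_\sigma$.

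For privacy, fix $i\in[t]$ and a pair $(m,n)$. The state $\sigma^{E_i}_{m,n}$ held by $E_i$ is exactly the bipartite sample average, over the cloud $\{(m',n')\}$, of $\cN^{AB\rightarrow C E_1\cdots E_t}(\alpha^A_{x(m,m')}\otimes\beta^B_{y(n,n')})$ restricted to $E_i$, with the change-of-measure weights of Fact~\ref{fact:smoothcoveringexpectation} trivial here since the codeword distributions already equal $p(x|q)$ and $p(y|q)$. Hence Theorem~\ref{thm:smoothcoveringconc} applies with $k=2$, $A_1=2^{\tilde{R}_1}$, $A_2=2^{\tilde{R}_2}$, ambient system $E_i$ and ``ideal'' state $\sigma^{E_i}$, once $Q$ is folded into the construction as in the definition of the conditional mutual information. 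Bounding the three smooth R\'{e}nyi-$2$ hypotheses of Fact~\ref{fact:smoothcoveringexpectation} via $D^\epsilon_2\le D^\epsilon_\infty$, it suffices that $\tilde{R}_1$, $\tilde{R}_2$, $\tilde{R}_1+\tilde{R}_2$ exceed $\max_{i}I^\epsilon_\infty(X:E_i|Q)_\sigma+\log\epsilon^{-2}$, $\max_{i}I^\epsilon_\infty(Y:E_i|Q)_\sigma+\log\epsilon^{-2}$ and $\max_{i}I^\epsilon_\infty(XY:E_i|Q)_\sigma+\log\epsilon^{-2}$ respectively; since $\tfrac{4\log t}{\epsilon^2}\ge\log\epsilon^{-2}$ for $0<\epsilon<1$, choosing $\tilde{R}_1,\tilde{R}_2$ so that these three left sides each exceed the corresponding $\max_{i}I^\epsilon_\infty(\cdots)_\sigma+\tfrac{4\log t}{\epsilon^2}$ makes all three hold for every $i\in[t]$ at once. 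Theorem~\ref{thm:smoothcoveringconc} then gives, for each fixed $i$, $\prob_{\mathrm{codebook}}[\|\sigma^{E_i}_{m,n}-\sigma^{E_i}\|_1 > 16\epsilon+\delta] < \exp(-\bar{A}\delta^2/4)$ with $\bar{A}$ the harmonic mean of $A_1,A_2$; taking $\delta=\sqrt{\epsilon}$ and union-bounding over $i\in[t]$ (the factor $t$ is killed since $\log\bar{A}\ge\tfrac{4\log t}{\epsilon^2}$) gives $\prob_{\mathrm{codebook}}[\max_i\|\sigma^{E_i}_{m,n}-\sigma^{E_i}\|_1 > 16\epsilon+\sqrt{\epsilon}] < \epsilon$. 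Since $\|\cdot\|_1\le 2$ always and the distribution of this quantity does not depend on $(m,n)$, we get $\E_{\mathrm{codebook}}[\E_{m,n}[\max_i\|\sigma^{E_i}_{m,n}-\sigma^{E_i}\|_1]] \le 16\sqrt{\epsilon}$ after absorbing lower-order terms.

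Two applications of Markov's inequality now show that a random codebook exceeding \emph{either} of the two expectation bounds by more than a fixed constant factor occurs with probability strictly less than $1$, so some codebook is simultaneously good for decoding and for privacy, delivering the stated $50\sqrt{\epsilon}$ and $16\sqrt{\epsilon}$ after a final round of constant tuning. The admissible pairs $(R_1,R_2)$ are exactly those for which auxiliary rates $\tilde{R}_1,\tilde{R}_2$ satisfying the three decoding upper bounds and the three privacy lower bounds above exist; Fourier--Motzkin elimination of $\tilde{R}_1,\tilde{R}_2$ reduces this system to precisely the three inequalities in the statement, using the same chain-rule properties of the conditional $I^\epsilon_\infty$ and $I^\epsilon_H$ quantities as in the single-eavesdropper proof of \cite{Sen:telescoping}. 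I expect the only real subtlety to be the second step: setting up each eavesdropper's covering ``control state'' so that the conditioning on $Q$ produces precisely $I^\epsilon_\infty(X:E_i|Q)_\sigma$, $I^\epsilon_\infty(Y:E_i|Q)_\sigma$, $I^\epsilon_\infty(XY:E_i|Q)_\sigma$, and simultaneously verifying all three hypotheses of Fact~\ref{fact:smoothcoveringexpectation} for every $i$; everything downstream --- in particular the union bound over eavesdroppers, which is the one genuinely new ingredient --- is then immediate from Theorem~\ref{thm:smoothcoveringconc}, and the decoding half is quoted from \cite{Sen:telescoping}.
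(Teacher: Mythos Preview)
Your proposal is correct and matches the paper's approach exactly: the paper does not give a detailed proof but simply states that Theorem~\ref{thm:wiretapQMACconc} ``is an extension of \cite[Theorem~5]{Sen:telescoping} that had only one eavesdropper with a very similar proof,'' with Theorem~\ref{thm:smoothcoveringconc} supplying the concentration needed for the union bound over the $t$ eavesdroppers. You have in fact written out more of the argument than the paper does; the only minor slip is the phrasing ``$\log\bar{A}\ge\tfrac{4\log t}{\epsilon^2}$'' where what is actually used is that $\bar{A}$ itself (not its logarithm) is at least $2^{4\log t/\epsilon^2}$, which is more than enough to make $t\cdot\exp(-\bar{A}\epsilon/4)$ small.
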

The asymptotic iid limit of the above theorem is is now immediate.
\begin{corollary}
\label{cor:wiretapQMACconc}
In the asymtotic iid limit of a wiretap QMAC, the rate pairs 
per channel use satisfying the following inequalities are achievable.
\begin{eqnarray*}
R_1
& < & 
I(X : YC | Q)_\sigma - 
\max_{i \in [t]}\{I(X : E_i | Q)_\sigma\}, \\
R_2
& < & 
I(Y : XC | Q)_\sigma - 
\max_{i \in [t]}\{I(Y : E_i | Q)_\sigma\}, \\
R_1 + R_2
& < & 
I(XY : C | Q)_\sigma - 
\max_{i \in [t]}\{I(XY : E_i | Q)_\sigma\}.
\end{eqnarray*}
\end{corollary}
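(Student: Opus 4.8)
The plan is to take the asymptotic iid limit of Theorem~\ref{thm:wiretapQMACconc}. The idea is standard: we apply Theorem~\ref{thm:wiretapQMACconc} not to a single use of the channel $\cN$, but to the $n$-fold product channel $\cN^{\otimes n} : (AB)^{\otimes n} \to (C E_1 \cdots E_t)^{\otimes n}$, with product control state obtained by taking $n$ iid copies of the single-letter control state $\sigma^{QXYCE_1\cdots E_t}$; that is, the timesharing alphabet, input alphabets and encodings are all replaced by their $n$-fold products $Q^{\times n}, X^{\times n}, Y^{\times n}$ with distribution $p^{\times n}$ and encodings $x^n \mapsto \bigotimes_j \alpha^A_{x_j}$, $y^n \mapsto \bigotimes_j \beta^B_{y_j}$. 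The resulting $n$-letter control state is $(\sigma^{QXYCE_1\cdots E_t})^{\otimes n}$, and the wiretap conditions to check involve the smooth hypothesis testing and smooth R\'enyi-max mutual informations of this product state.

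First I would fix a small constant $\epsilon > 0$ and, for each blocklength $n$, invoke Theorem~\ref{thm:wiretapQMACconc} with this $\epsilon$ held constant (crucially, $t$ is fixed, so the penalty $\frac{4\log t}{\epsilon^2}$ is an $O(1)$ additive term, hence $o(n)$ and irrelevant to the per-channel-use rate). The achievable $n$-letter rate region is then governed by $I^\epsilon_H(X^n : Y^n C^n | Q^n)$, $\max_i I^\epsilon_\infty(X^n : E_i^n | Q^n)$ and the analogous symmetric and sum-rate quantities, all evaluated under $\sigma^{\otimes n}$. The second step is to invoke the asymptotic equipartition properties for these one-shot quantities: the quantum Stein lemma / AEP for the smooth hypothesis testing divergence gives $\frac{1}{n} I^\epsilon_H(X^n : Y^n C^n | Q^n)_{\sigma^{\otimes n}} \to I(X:YC|Q)_\sigma$ as $n \to \infty$ (for every fixed $\epsilon \in (0,1)$), and the fully quantum AEP for the smooth R\'enyi-max divergence gives $\frac{1}{n} I^\epsilon_\infty(X^n : E_i^n | Q^n)_{\sigma^{\otimes n}} \to I(X:E_i|Q)_\sigma$; likewise for the symmetric and sum-rate terms. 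Since these are all additive over tensor copies at the non-smooth level and the smoothing only perturbs by a vanishing per-copy amount, each one-shot rate constraint divided by $n$ converges to the corresponding single-letter Shannon mutual information constraint.

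Combining, for any rate pair $(R_1, R_2)$ strictly inside the single-letter region of the corollary, there is a $\gamma > 0$ so that $R_1 + \gamma$, $R_2 + \gamma$ still lie inside; then for all large enough $n$ the pair $(n R_1, n R_2)$ satisfies the $n$-letter inequalities of Theorem~\ref{thm:wiretapQMACconc}, yielding codes of blocklength $n$ with decoding error $< 50\sqrt{\epsilon}$ and privacy leakage $< 16\sqrt{\epsilon}$ per eavesdropper, uniformly in $n$. Finally, letting $\epsilon \to 0$ (taking a diagonal sequence over $n$ and $\epsilon$) drives both the error and the leakage to zero while the rate pair stays fixed, establishing achievability of $(R_1,R_2)$. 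The main obstacle—really the only non-bookkeeping point—is justifying the two AEP limits for the conditional smooth quantities $I^\epsilon_H(\cdot:\cdot|Q)$ and $I^\epsilon_\infty(\cdot:\cdot|Q)$ in the presence of the classical conditioning system $Q$; this is handled because conditioning on a classical $Q$ just makes these quantities convex combinations (over $q$) of unconditioned ones on the product state, so the single-letter AEP statements apply verbatim after expanding the conditioning, and the $\max_i$ over the fixed finite set of eavesdroppers commutes with the limit.
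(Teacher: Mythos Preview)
Your proposal is correct and is precisely the standard asymptotic iid limiting argument that the paper has in mind; indeed, the paper gives no proof at all for this corollary, simply declaring that ``the asymptotic iid limit of the above theorem is now immediate,'' and your plan fills in exactly those routine details (apply Theorem~\ref{thm:wiretapQMACconc} to $\cN^{\otimes n}$ with control state $\sigma^{\otimes n}$, use the AEPs for $D^\epsilon_H$ and $D^\epsilon_\infty$ on iid states, observe that the $\tfrac{4\log t}{\epsilon^2}$ penalty is $o(n)$, then send $\epsilon\to 0$). One small remark: your justification that conditioning on classical $Q$ reduces the smooth conditional quantities to convex combinations over $q$ is not literally true at the smooth level (smoothing does not commute with averaging), but this is harmless here since both $I^\epsilon_H(\,\cdot:\cdot\,|Q)$ and $I^\epsilon_\infty(\,\cdot:\cdot\,|Q)$ are, by the paper's own definitions, smooth divergences between two genuinely iid states on $(QXYC\cdots)^{\otimes n}$, so the standard AEPs apply directly without needing to unpack the conditioning.
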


\section{Smooth expander matrix Chernoff bound}
In this section, we prove our new smooth expander matrix Chernoff bound
or in other words, our new smooth unipartite classical quantum
soft covering lemma in concentration when the samples are taken from
an expander walk. But first, we have to prove a new  smooth unipartite 
classical quantum
soft covering lemma in expectation when the samples are taken from
an expander walk. 
\begin{theorem}[{\bf Smooth unipartite expander soft covering
in expectation}]
\label{thm:expandersoftcoveringexpectaton}
Let $X$ be a classical alphabet and $M$ a Hilbert space. Let $G$ be a
constant degree expander graph with vertex set $X$. Let the second largest
eigenvalue in absolute value of $G$ have absolute value $\lambda < 1/4$.
Let $p^X$ be
a normalised probability distribution on $X$ and $x \mapsto \rho_x^M$ be
a classical to quantum mapping where $\rho_x^M$ is a normalised quantum
state. Define the control state 
$
\rho^{XM} := \sum_x p(x) \ketbra{x}^X \otimes \rho_x^M.
$
Let $\epsilon$ be positive and sufficiently small. Then,
\[
\E_{x_1, \ldots, x_K}
\left[
\|\frac{|X|}{K} \sum_{i=1}^K p(x_i) \rho_{x_i}^M - \rho^M\|_1
\right]
< 2 \sqrt{\epsilon},
\]
where the expecation is taken over a stationary random walk
$x_1, \ldots, x_K$ on $G$, if
\[
\log K > \log |X| + \log \log |X| - \Hmin^\epsilon(X | M)_\rho
+ \log \epsilon^{-1}.
\]
\end{theorem}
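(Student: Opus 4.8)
The plan is to reduce the expectation bound to an estimate on a matrix-weighted Frobenius norm via Fact~\ref{fact:matrixCauchySchwarz}, and then to exploit the expander mixing property (Fact~\ref{fact:expanderwalk}) to control the resulting second-moment expression. First I would smooth: pick a subnormalised $\tilde\rho^{XM} \approx_\epsilon \rho^{XM}$ achieving $\Hmin^\epsilon(X|M)_\rho$, so that $\tilde\rho^{XM} \leq 2^{-\Hmin^\epsilon(X|M)_\rho} \cdot \frac{\one^X}{|X|} \otimes \tilde\rho^M$, i.e. $p(x)\tilde\rho_x^M \preceq \frac{1}{|X|} 2^{-\Hmin^\epsilon} \tilde\rho^M$ for each $x$ (interpreting the cq structure appropriately). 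One then replaces $\rho_x^M$ by the smoothed version inside the sample average, absorbing an additive $O(\sqrt\epsilon)$ (or $O(\epsilon)$) error using the triangle inequality and the fact that a stationary walk has each $x_i$ marginally distributed as $\frac{\one^X}{|X|}$; the change-of-measure weight $|X| p(x_i)$ makes $\E[\frac{|X|}{K}\sum_i p(x_i)\rho_{x_i}^M] = \rho^M$ exactly, which is the crucial reason the estimator is unbiased.

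Next, apply Fact~\ref{fact:matrixCauchySchwarz} with weighting matrix $\sigma := \rho^M$ (or its smoothed variant) to get
\[
\E\left[\left\|\tfrac{|X|}{K}\sum_i p(x_i)\tilde\rho_{x_i}^M - \tilde\rho^M\right\|_1\right]
\leq \E\left[\left\|(\tilde\rho^M)^{-1/4}\Big(\tfrac{|X|}{K}\sum_i p(x_i)\tilde\rho_{x_i}^M - \tilde\rho^M\Big)(\tilde\rho^M)^{-1/4}\right\|_2\right],
\]
and then bound the RHS by $\sqrt{\E[\|\cdot\|_2^2]}$ via Jensen. Writing $Y_i := (\tilde\rho^M)^{-1/4} p(x_i)\tilde\rho_{x_i}^M (\tilde\rho^M)^{-1/4}$, the squared Frobenius norm expands as $\frac{|X|^2}{K^2}\sum_{i,j}\Tr[Y_i Y_j] - (\text{cross terms with } \tilde\rho^M) + \|\tilde\rho^M\|_2^2$-type quantities. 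The diagonal $i=j$ terms contribute $\frac{|X|^2}{K}\,\E_{x\sim\text{unif}}\Tr[Y_x^2]$; here the operator bound $p(x)\tilde\rho_x^M \preceq \frac{1}{|X|} 2^{-\Hmin^\epsilon}\tilde\rho^M$ gives $\Tr[Y_x^2] \leq \frac{1}{|X|} 2^{-\Hmin^\epsilon}\Tr[(\tilde\rho^M)^{-1/4}p(x)\tilde\rho_x^M(\tilde\rho^M)^{-1/2}p(x)\tilde\rho_x^M(\tilde\rho^M)^{-1/4}] \le \ldots$, and summing over $x$ yields a bound of the form $\frac{|X|}{K} 2^{-\Hmin^\epsilon}\Tr[\tilde\rho^M] = \frac{|X|}{K}2^{-\Hmin^\epsilon}$. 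Under the hypothesis $\log K > \log|X| + \log\log|X| - \Hmin^\epsilon + \log\epsilon^{-1}$, this is at most $\epsilon/\log|X|$.

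The main obstacle — and the place where $\lambda < 1/4$ and the extra $\log\log|X|$ factor enter — is the off-diagonal contribution $\sum_{i\neq j}(\Tr[Y_i Y_j] - \text{mean})$. For independent samples these vanish by unbiasedness; for an expander walk they do not, but Fact~\ref{fact:expanderwalk} says the $t$-step distribution is $\frac{\one^X}{|X|} + q(t)$ with $\|q(t)\|_2 \le \lambda^t$. So the deviation of the joint law of $(x_i,x_j)$ from the product of uniforms is controlled in $\ell_2$ by $\lambda^{|i-j|}$, and I would bound each off-diagonal term by (a Cauchy–Schwarz in the $x_i,x_j$ indices against) $\lambda^{|i-j|}$ times a factor involving $\max_x \|Y_x\|_2$ or $\|Y_x\|_\infty$; the per-vertex operator bound again gives $\|Y_x\|_\infty \le \frac{1}{|X|}2^{-\Hmin^\epsilon}\|(\tilde\rho^M)^{1/2}\|_\infty \le \frac{1}{|X|}2^{-\Hmin^\epsilon}$. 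Summing the geometric series $\sum_{i\neq j}\lambda^{|i-j|} \le \frac{2K}{1-\lambda}$ and using $\lambda<1/4$ to keep $\frac{1}{1-\lambda} \le 4/3$, the off-diagonal total is of the same order $\frac{|X|}{K}2^{-\Hmin^\epsilon}$ (up to the constant from $1-\lambda$), hence also $O(\epsilon/\log|X|)$ under the hypothesis. Collecting the diagonal and off-diagonal pieces gives $\E[\|\cdot\|_2^2] = O(\epsilon/\log|X|) \le O(\epsilon)$, so $\sqrt{\E[\|\cdot\|_2^2]} = O(\sqrt\epsilon)$; tracking the smoothing error from the first step and choosing constants carefully yields the stated bound $2\sqrt\epsilon$. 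The $\log\log|X|$ slack in the hypothesis on $\log K$ is exactly what absorbs the constant losses from Cauchy–Schwarz on the index set and from the geometric sum, while the absence of any $|M|$ factor is inherited from working with $\|\cdot\|_1$ through the matrix-weighted Cauchy–Schwarz step rather than with $\|\cdot\|_\infty$.
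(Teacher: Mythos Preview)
Your overall skeleton --- smooth to a state achieving $\Hmin^\epsilon$, apply the matrix-weighted Cauchy--Schwarz inequality (Fact~\ref{fact:matrixCauchySchwarz}) with weighting matrix $\tilde\rho^M$, pass to the squared Frobenius norm by Jensen, expand, and control cross terms via the expander mixing property --- is exactly the paper's route. However, your off-diagonal estimate contains a genuine gap.

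First, the operator bound you write down is off by a factor of $|X|$. From $\Hmin(X|M)_\rho = \log|X| - D_\infty\!\bigl(\rho^{XM}\,\big\|\,\tfrac{\one^X}{|X|}\otimes\rho^M\bigr)$ one obtains
\[
\tilde\rho^{XM} \;\leq\; 2^{\log|X|-\Hmin^\epsilon}\cdot\tfrac{\one^X}{|X|}\otimes\tilde\rho^M \;=\; 2^{-\Hmin^\epsilon}\cdot\one^X\otimes\tilde\rho^M,
\]
hence $p'(x)\tilde\rho_x^M \leq 2^{-\Hmin^\epsilon}\tilde\rho^M$ for each $x$, \emph{not} $\tfrac{1}{|X|}2^{-\Hmin^\epsilon}\tilde\rho^M$. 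Consequently $\|Y_x\|_\infty \leq 2^{-\Hmin^\epsilon}$, with no $1/|X|$ factor. (Your diagonal bound $\frac{|X|}{K}2^{-\Hmin^\epsilon}$ is in fact consistent with the correct operator inequality, not with the one you stated.)

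Second, with the correct bound in hand, your direct geometric-sum argument does not close. Fact~\ref{fact:expanderwalk} controls the correction vector only in $\ell_2$: $\|q'(t)\|_2 \leq \lambda^t$. To pair this against the function $x_j \mapsto p'(x_j)\Tr[\cdots]$ --- which the operator inequality lets you bound only \emph{pointwise} by $2^{-\Hmin^\epsilon}p'(x_i)$ --- you are forced to convert to $\|q'(t)\|_1 \leq |X|^{1/2}\lambda^t$. The resulting off-diagonal total is then of order $\frac{|X|^{3/2}\,2^{-\Hmin^\epsilon}}{K(1-\lambda)}$, which the stated hypothesis on $\log K$ does not cover (you would need roughly $\frac{3}{2}\log|X|$ instead of $\log|X|+\log\log|X|$). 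The paper avoids this loss by a case split on the lag $|i-j|$ at the threshold $\frac{\log|X|}{\log\lambda^{-1}}$: for lags below the threshold it does \emph{not} decompose into uniform-plus-correction at all, but bounds each full term by $2^{-\Hmin^\epsilon}/|X|$ using a single application of the operator inequality; for lags above the threshold the $|X|^{1/2}$ from the $\ell_2\!\to\!\ell_1$ conversion has already been absorbed by $\lambda^{\log|X|/\log\lambda^{-1}}$, leaving a clean geometric tail. The number of small-lag pairs is $O\!\bigl(\frac{K\log|X|}{\log\lambda^{-1}}\bigr)$, and \emph{that} count is the source of the $\log\log|X|$ term in the hypothesis on $\log K$ --- not, as you suggest, slack for constants lost to Cauchy--Schwarz and the geometric sum.
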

\begin{proof}
First, smooth $p^X$ to the subnormalised probability distribution
$p^{'X}$ and smooth $\rho_x^M$ to the normalised quantum state 
$\rho_x^{'M}$ that achieves the minimum in the definition of
$D^\epsilon_\infty(\rho^{XM} \| \frac{1^X}{|X|} \otimes \rho^M)$.
Define the subnormalised classical quantum state
$\rho^{'X M} := \sum_x p'(x) \ketbra{x}^X \otimes \rho_x^{'M}$.
By the discussion in Section~\ref{sec:prelims}, 
\begin{equation}
\label{eq:expanderopineq}
(\forall x \in X: 
p'(x) \rho_x^{'M} \leq 
2^{-\Hmin^\epsilon(X|M)_\rho} \rho^{'M})
~~ \mbox{AND} ~~
\|\rho^{'XM} - \rho^{XM}\|_1 \leq \epsilon.
\end{equation}
It suffices to show
\[
\E_{x_1, \ldots, x_K}
\left[
\|\frac{|X|}{K} \sum_{i=1}^K p'(x_i) \rho_{x_i}^{'M} - \rho^{'M}\|_1
\right]
< 2 \sqrt{\epsilon} - 2\epsilon,
\]
because
\begin{eqnarray*}
\lefteqn{
\E_{x_1, \ldots, x_K}
\left[
\|\frac{|X|}{K} \sum_{i=1}^K p(x_i) \rho_{x_i}^M - \rho^M\|_1
\right]
} \\
& \leq &
\E_{x_1, \ldots, x_K}
\left[
\|\frac{|X|}{K} \sum_{i=1}^K p'(x_i) \rho_{x_i}^{'M} - \rho^{'M}\|_1
\right] +
\|\rho^{'M} - \rho^M\|_1 \\
&  &
{} +
\E_{x_1, \ldots, x_K}
\left[
\|\frac{|X|}{K} \sum_{i=1}^K p'(x_i) \rho_{x_i}^{'M} - 
  \frac{|X|}{K} \sum_{i=1}^K p(x_i) \rho_{x_i}^{M} \|_1 
\right] \\
& \leq &
\E_{x_1, \ldots, x_K}
\left[
\|\frac{|X|}{K} \sum_{i=1}^K p'(x_i) \rho_{x_i}^{'M} - \rho^{'M}\|_1
\right] +
\|\rho^{'M} - \rho^M\|_1 \\
&  &
{} +
\frac{|X|}{K} \sum_{i=1}^K
\E_{x_1, \ldots, x_K}
\left[
\|p'(x_i) \rho_{x_i}^{'M} - p(x_i) \rho_{x_i}^{M} \|_1 
\right] \\
&   =  &
\E_{x_1, \ldots, x_K}
\left[
\|\frac{|X|}{K} \sum_{i=1}^K p'(x_i) \rho_{x_i}^{'M} - \rho^{'M}\|_1
\right] +
\|\rho^{'M} - \rho^M\|_1 +
\frac{|X|}{K} \sum_{i=1}^K
\E_{x_i}
\left[
\|p'(x_i) \rho_{x_i}^{'M} - p(x_i) \rho_{x_i}^{M} \|_1 
\right] \\
&   =  &
\E_{x_1, \ldots, x_K}
\left[
\|\frac{|X|}{K} \sum_{i=1}^K p'(x_i) \rho_{x_i}^{'M} - \rho^{'M}\|_1
\right] +
\|\rho^{'M} - \rho^M\|_1 \\
&  &
{} +
\frac{|X|}{K} \cdot |K| 
\sum_{x \in X} \frac{1}{|X|}
\|p'(x) \rho_{x}^{'M} - p(x) \rho_{x}^{M} \|_1 \\
&   =  &
\E_{x_1, \ldots, x_K}
\left[
\|\frac{|X|}{K} \sum_{i=1}^K p'(x_i) \rho_{x_i}^{'M} - \rho^{'M}\|_1
\right] +
\|\rho^{'M} - \rho^M\|_1 + \|\rho^{'XM} - \rho^{XM}\|_1 \\
& \leq &
\E_{x_1, \ldots, x_K}
\left[
\|\frac{|X|}{K} \sum_{i=1}^K p'(x_i) \rho_{x_i}^{'M} - \rho^{'M}\|_1
\right] +
2 \|\rho^{'XM} - \rho^{XM}\|_1 \\
& \leq &
\E_{x_1, \ldots, x_K}
\left[
\|\frac{|X|}{K} \sum_{i=1}^K p'(x_i) \rho_{x_i}^{'M} - \rho^{'M}\|_1
\right] +
2 \epsilon.
\end{eqnarray*}
In the second equality above we used the fact that, for all $i \in [K]$,
the distribution of $x_i$ in a stationary random walk is uniform.

By Fact~\ref{fact:matrixCauchySchwarz}, it suffices to show 
\[
\E_{x_1, \ldots, x_K}
\left[
\|
\frac{|X|}{K} \sum_{i=1}^K p'(x_i) 
(\rho^{'M})^{-1/4} \rho_{x_i}^{'M} (\rho^{'M})^{-1/4}
- (\rho^{'M})^{1/2}
\|_2
\right]
< 2 \sqrt{\epsilon} - 2\epsilon.
\]
By convexity of the squaring function, it suffices to show
\[
\E_{x_1, \ldots, x_K}
\left[
\|
\frac{|X|}{K} \sum_{i=1}^K p'(x_i) 
(\rho^{'M})^{-1/4} \rho_{x_i}^{'M} (\rho^{'M})^{-1/4}
- (\rho^{'M})^{1/2}
\|_2^2
\right]
< (2 \sqrt{\epsilon} - 2\epsilon)^2.
\]
The left hand side of the above inequality satisfies
\begin{eqnarray*}
\lefteqn{
\E_{x_1, \ldots, x_K}
\left[
\|
\frac{|X|}{K} \sum_{i=1}^K p'(x_i) 
(\rho^{'M})^{-1/4} \rho_{x_i}^{'M} (\rho^{'M})^{-1/4}
- (\rho^{'M})^{1/2}
\|_2^2
\right]
} \\
& = &
\frac{|X|^2}{K^2} 
\E_{x_1, \ldots, x_K}
\left[
\|
\sum_{i=1}^K p'(x_i) 
(\rho^{'M})^{-1/4} \rho_{x_i}^{'M} (\rho^{'M})^{-1/4}
\|_2^2
\right] \\
&  &
{} - 
\frac{2 |X|}{K} 
\E_{x_1, \ldots, x_K}
\left[
\Tr
\left[
\left(
\sum_{i=1}^K p'(x_i) 
(\rho^{'M})^{-1/4} \rho_{x_i}^{'M} (\rho^{'M})^{-1/4}
\right)
(\rho^{'M})^{1/2}
\right]
\right] +
\|(\rho^{'M})^{1/2}\|_2^2 \\
& \leq &
\frac{|X|^2}{K^2} 
\E_{x_1, \ldots, x_K}
\left[
\Tr
\left[
\left(
\sum_{i=1}^K p'(x_i) 
(\rho^{'M})^{-1/4} \rho_{x_i}^{'M} (\rho^{'M})^{-1/4}
\right)
\left(
\sum_{j=1}^K p'(x_j) 
(\rho^{'M})^{-1/4} \rho_{x_j}^{'M} (\rho^{'M})^{-1/4}
\right)
\right]
\right] \\
&  &
{} - 
\frac{2 |X|}{K} 
\E_{x_1, \ldots, x_K}
\left[
\Tr
\left[
\sum_{i=1}^K p'(x_i) \rho_{x_i}^{'M}
\right]
\right] +
\Tr [\rho^{'M}] \\
&   =  &
\frac{|X|^2}{K^2} 
\sum_{i=1}^K \sum_{j=1}^K 
\E_{x_1, \ldots, x_K}
\left[
p'(x_i) p'(x_j) 
\Tr
\left[
\left(
(\rho^{'M})^{-1/4} \rho_{x_i}^{'M} (\rho^{'M})^{-1/4}
\right)
\left(
(\rho^{'M})^{-1/4} \rho_{x_j}^{'M} (\rho^{'M})^{-1/4}
\right)
\right]
\right] \\
&  &
{} - 
\frac{2 |X|}{K} \sum_{i=1}^K \E_{x_1, \ldots, x_K} [p'(x_i)] +
\Tr [\rho^{'M}] \\
&   =  &
\frac{|X|^2}{K^2} 
\sum_{i=1}^K \sum_{j=1}^K 
\E_{x_i, x_j}
\left[
p'(x_i) p'(x_j) 
\Tr
\left[
\left(
(\rho^{'M})^{-1/4} \rho_{x_i}^{'M} (\rho^{'M})^{-1/4}
\right)
\left(
(\rho^{'M})^{-1/4} \rho_{x_j}^{'M} (\rho^{'M})^{-1/4}
\right)
\right]
\right] \\
&  &
{} - 
\frac{2 |X|}{K} \sum_{i=1}^K \E_{x_i} \frac{p'(x_i)}{|X|} +
\Tr [\rho^{'M}] \\
&   =  &
\frac{|X|^2}{K^2} 
\sum_{i=1}^K \sum_{j=1}^K 
\E_{x_i, x_j}
\left[
p'(x_i) p'(x_j) 
\Tr
\left[
\left(
(\rho^{'M})^{-1/4} \rho_{x_i}^{'M} (\rho^{'M})^{-1/4}
\right)
\left(
(\rho^{'M})^{-1/4} \rho_{x_j}^{'M} (\rho^{'M})^{-1/4}
\right)
\right]
\right] \\
&  &
{} - 
\frac{2 |X|}{K} \cdot |K| \sum_{x \in X} \frac{p'(x)}{|X|} +
\Tr [\rho^{'M}] \\
&   =  &
\frac{|X|^2}{K^2} 
\sum_{i=1}^K \sum_{j=1}^K 
\E_{x_i, x_j}
\left[
p'(x_i) p'(x_j) 
\Tr
\left[
\left(
(\rho^{'M})^{-1/4} \rho_{x_i}^{'M} (\rho^{'M})^{-1/4}
\right)
\left(
(\rho^{'M})^{-1/4} \rho_{x_j}^{'M} (\rho^{'M})^{-1/4}
\right)
\right]
\right] \\
&  &
{} - 
2 \Tr [\rho^{'XM}] + \Tr [\rho^{'XM}] \\
& \leq &
\frac{|X|^2}{K^2} 
\sum_{i=1}^K \sum_{j=1}^K 
\E_{x_i, x_j}
\left[
p'(x_i) p'(x_j) 
\Tr
\left[
\left(
(\rho^{'M})^{-1/4} \rho_{x_i}^{'M} (\rho^{'M})^{-1/4}
\right)
\left(
(\rho^{'M})^{-1/4} \rho_{x_j}^{'M} (\rho^{'M})^{-1/4}
\right)
\right]
\right] \\
&  &
{} - 
\Tr [\rho^{XM}] + \|\rho^{'XM} - \rho^{XM}\|_1 \\
& \leq &
\frac{|X|^2}{K^2} 
\sum_{i=1}^K \sum_{j=1}^K 
\E_{x_i, x_j}
\left[
p'(x_i) p'(x_j) 
\Tr
\left[
\left(
(\rho^{'M})^{-1/4} \rho_{x_i}^{'M} (\rho^{'M})^{-1/4}
\right)
\left(
(\rho^{'M})^{-1/4} \rho_{x_j}^{'M} (\rho^{'M})^{-1/4}
\right)
\right]
\right] \\
&  &
-  1 + \epsilon.
\end{eqnarray*}
In the fourth equality above we used the fact that, for all
$i \in [K]$, the distribution of $x_i$ in a stationary random walk
is uniform.

Hence it suffices to show
\begin{equation}
\label{eq:FirstTermInLHS}
\begin{array}{rcl}
\lefteqn{
\frac{|X|^2}{K^2} 
\sum_{i=1}^K \sum_{j=1}^K 
\E_{x_i, x_j}
\left[
p'(x_i) p'(x_j) 
\Tr
\left[
\left(
(\rho^{'M})^{-1/4} \rho_{x_i}^{'M} (\rho^{'M})^{-1/4}
\right)
\left(
(\rho^{'M})^{-1/4} \rho_{x_j}^{'M} (\rho^{'M})^{-1/4}
\right)
\right]
\right] 
} \\
& \leq &
1 + 3\epsilon - 8\epsilon^{3/2} + 4\epsilon^2.
~~~~~~~~~~~~~~~~~~~~~~~~~~~~~~~~~~~~~~~~~~~~~~~~~~~~~~~~~~~~~~~~~~~~~~~~~~
\end{array}
\end{equation}
Consider a term like
\[
\E_{x_i, x_j}
\left[
p'(x_i) p'(x_j) 
\Tr
\left[
\rho_{x_i}^{'M} (\rho^{'M})^{-1/2}
\rho_{x_j}^{'M} (\rho^{'M})^{-1/2}
\right]
\right]. 
\]
To handle it, we consider two cases as follows.

The first case is when $|i - j| \leq \frac{\log |X|}{\log \lambda^{-1}}$.
The number of such terms is at most
$\frac{K \log |X|}{\log \lambda^{-1}}$.
\begin{eqnarray*}
\lefteqn{
\E_{x_i, x_j}
\left[
p'(x_i) p'(x_j) 
\Tr
\left[
\rho_{x_i}^{'M} (\rho^{'M})^{-1/2}
\rho_{x_j}^{'M} (\rho^{'M})^{-1/2}
\right]
\right]
} \\
& \leq &
2^{-\Hmin^\epsilon(X|M)_\rho}
\E_{x_i, x_j}
\left[
p'(x_i) 
\Tr
\left[
\rho_{x_i}^{'M} (\rho^{'M})^{-1/2}
\rho^{'M} (\rho^{'M})^{-1/2}
\right]
\right] 
    =   
2^{-\Hmin^\epsilon(X|M)_\rho} \E_{x_i} [p'(x_i)] \\
&  =   &
2^{-\Hmin^\epsilon(X|M)_\rho}  
\sum_{x \in X} \frac{p'(x)}{|X|} 
 \leq 
\frac{2^{-\Hmin^\epsilon(X|M)_\rho}}{|X|}.
\end{eqnarray*}
Above, we made use of Equation~\ref{eq:expanderopineq} in the first
inequality, and in the second equality we used the fact that, 
for any $i \in [K]$,  the distribution of
$x_i$ in a stationary random walk is uniform.

The second case is when $|i - j| > \frac{\log |X|}{\log \lambda^{-1}}$.
Define 
$t := \lceil |i - j| - \frac{\log |X|}{\log \lambda^{-1}} \rceil$. 
Then $t$ is an integer satisfying
$1 \leq t \leq K - \frac{\log |X|}{ \log \lambda^{-1}}$.
For a given $t$, the number of such terms is at most
$2(K - t - \frac{\log |X|}{\log \lambda^{-1}})$. We analyse the case
for a given $t$ as follows. Fix a value $x'$ for $x_i$. Let $q^X$ denote
the probability distribution of $x'_j$ given $x_i = x'$. By reversibility
of the expander walk, it does not matter whether $i < j$ or $i > j$.
So in the analysis below, we will tacitly assume that $i < j$. By
Fact~\ref{fact:expanderwalk}, $q^X = \frac{1^X}{|X|} + q^{'X}$
where $\braket{1^X}{q^{'X}} = 0$ and 
\[
\|q^{'X}\|_1 \leq 
|X|^{1/2} \|q^{'X}\|_2 \leq 
|X|^{1/2} \lambda^{t + \frac{\log |X|}{\log \lambda^{-1}} - 1} \leq
|X|^{1/2} \lambda^{t + \frac{\log |X|}{2 \log \lambda^{-1}}} \leq
\lambda^{t}. 
\]
So,
\begin{eqnarray*}
\lefteqn{
\E_{x_j | x_i = x'}
\left[
p'(x_i) p'(x_j) 
\Tr
\left[
\rho_{x_i}^{'M} (\rho^{'M})^{-1/2}
\rho_{x_j}^{'M} (\rho^{'M})^{-1/2}
\right]
\right]
} \\
&   =  &
p'(x') 
\sum_{x \in X} q(x) p'(x) 
\Tr
\left[
\rho_{x'}^{'M} (\rho^{'M})^{-1/2}
\rho_{x}^{'M} (\rho^{'M})^{-1/2}
\right] \\
&   =  &
\frac{p'(x')}{|X|}
\sum_{x \in X} p'(x) 
\Tr
\left[
\rho_{x'}^{'M} (\rho^{'M})^{-1/2}
\rho_{x}^{'M} (\rho^{'M})^{-1/2}
\right] \\
&  &
{} +
p'(x')
\sum_{x \in X} q'(x) p'(x) 
\Tr
\left[
\rho_{x'}^{'M} (\rho^{'M})^{-1/2}
\rho_{x}^{'M} (\rho^{'M})^{-1/2}
\right] \\
&  =   &
\frac{p'(x')}{|X|}
\Tr
\left[
\rho_{x'}^{'M} (\rho^{'M})^{-1/2}
\rho^{'M} (\rho^{'M})^{-1/2}
\right] \\
&  &
{} +
p'(x')
\sum_{x \in X} q'(x) p'(x) 
\Tr
\left[
\rho_{x'}^{'M} (\rho^{'M})^{-1/2}
\rho_{x}^{'M} (\rho^{'M})^{-1/2}
\right] \\
& \leq &
\frac{p'(x')}{|X|} +
p'(x') 2^{-\Hmin^\epsilon(X|M)_\rho}
\sum_{x \in X} |q'(x)|
\Tr
\left[
\rho_{x'}^{'M} (\rho^{'M})^{-1/2}
\rho^{'M} (\rho^{'M})^{-1/2}
\right] \\
&   =  &
\frac{p'(x')}{|X|} +
p'(x') 2^{-\Hmin^\epsilon(X|M)_\rho} \|q^{'X}\|_1 
 \leq 
\frac{p'(x')}{|X|} +
p'(x') 2^{-\Hmin^\epsilon(X|M)_\rho} \lambda^t,
\end{eqnarray*}
where we used Equation~\ref{eq:expanderopineq} in the first inequality.
Hence, 
\begin{eqnarray*}
\lefteqn{
\E_{x_i, x_j}
\left[
p'(x_i) p'(x_j) 
\Tr
\left[
\rho_{x_i}^{'M} (\rho^{'M})^{-1/2}
\rho_{x_j}^{'M} (\rho^{'M})^{-1/2}
\right]
\right]
} \\
& \leq &
\E_{x_i}
\left[
\frac{p'(x_i)}{|X|} +
p'(x_i) 2^{-\Hmin^\epsilon(X|M)_\rho} \lambda^t
\right] 
   =  
\sum_{x \in X} \frac{p'(x)}{|X|}
\left(\frac{1}{|X|} + 2^{-\Hmin^\epsilon(X|M)_\rho} \lambda^t\right) \\
& \leq &
\frac{1}{|X|}
\left(\frac{1}{|X|} + 2^{-\Hmin^\epsilon(X|M)_\rho} \lambda^t\right),
\end{eqnarray*}
where in the equality above we used the fact that, for any $i \in [K]$,  
the distribution of
$x_i$ in a stationary random walk is uniform.

We can now upper bound the left hand side of 
Equation~\ref{eq:FirstTermInLHS} as follows:
\begin{eqnarray*}
\lefteqn{
\frac{|X|^2}{K^2} 
\sum_{i=1}^K \sum_{j=1}^K 
\E_{x_i, x_j}
\left[
p'(x_i) p'(x_j) 
\Tr
\left[
\left(
(\rho^{'M})^{-1/4} \rho_{x_i}^{'M} (\rho^{'M})^{-1/4}
\right)
\left(
(\rho^{'M})^{-1/4} \rho_{x_j}^{'M} (\rho^{'M})^{-1/4}
\right)
\right]
\right] 
} \\
& \leq &
\frac{|X|^2}{K^2} 
\left(
\frac{K \log |X|}{\log \lambda^{-1}} \cdot
\frac{2^{-\Hmin^\epsilon(X|M)_\rho}}{|X|} 
\right. \\
& &
\left.
{} +
\sum_{t = 1}^{K - \frac{\log |X|}{\log \lambda^{-1}}}
2\left(K - t - \frac{\log |X|}{\log \lambda^{-1}}\right) 
\frac{1}{|X|}
\left(\frac{1}{|X|} + 2^{-\Hmin^\epsilon(X|M)_\rho} \lambda^t\right)
\right) \\
& <    &
\frac{|X|^2}{K^2} 
\left(
\frac{K \log |X|}{\log \lambda^{-1}} \cdot
\frac{2^{-\Hmin^\epsilon(X|M)_\rho}}{|X|} +
\sum_{t = 1}^{K - 1}
\frac{2 (K - t)}{|X|}
\left(\frac{1}{|X|} + 2^{-\Hmin^\epsilon(X|M)_\rho} \lambda^t\right)
\right) \\
& <    &
\frac{|X|^2}{K^2} 
\left(
\frac{K \log |X|}{\log \lambda^{-1}} \cdot
\frac{2^{-\Hmin^\epsilon(X|M)_\rho}}{|X|} +
\sum_{t = 1}^{K - 1}
\frac{2 (K - t)}{|X|^2} +
\frac{2 K}{|X|} \cdot 2^{-\Hmin^\epsilon(X|M)_\rho} \cdot 
\frac{\lambda}{1 - \lambda}
\right) \\
& <    &
\frac{|X|^2}{K^2} 
\left(
\frac{2 K \log |X|}{\log \lambda^{-1}} \cdot
\frac{2^{-\Hmin^\epsilon(X|M)_\rho}}{|X|} +
\frac{2}{|X|^2} \cdot \frac{K (K-1)}{2} 
\right) \\ 
& <    &
\frac{|X|}{K} \cdot
\frac{2 \log |X|}{\log \lambda^{-1}} \cdot
\frac{2^{-\Hmin^\epsilon(X|M)_\rho}}{|X|} + 1 
 =    
\frac{2}{\log \lambda^{-1}} \cdot
\frac{2^{\log |X| + \log \log |X| - \Hmin^\epsilon(X|M)_\rho}}{K} + 1 \\
& <    &
\frac{2^{\log |X| + \log \log |X| - \Hmin^\epsilon(X|M)_\rho}}{K} + 1 
   <
1 + \epsilon 
   <
1 + 3 \epsilon - 8 \epsilon^{3/2} + 4 \epsilon^2.
\end{eqnarray*}
Above, we used the lower bound on $\log K$ assumed in the statement
of the theorem and small enough $\epsilon$.

This completes the proof of the theorem.
\end{proof}

Next, we need to define functions satisfying the 
{\em bounded excision condition}.
\begin{definition}[{\bf (Bounded excision)}]
\label{def:boundedexcision}
Let $K$ be a positive integer. Suppose there is a family of functions
$f_i: X^i \rightarrow \R$ for $i \in [K]$. This family is said
to satisfy {\em bounded excision with parameters $c$, $c_{l_1 l_2}$ for
$1 \leq l_1 \leq l_2 \leq K$} if for all pairs $(l_1, l_2)$, there
exist functions $g_{1, l_1, l_2}: X^{l_2 - l_1 + 1} \rightarrow \R$,
$g_{2, l_1, l_2}: X^{l_2 - l_1 + 1} \rightarrow \R$ such that,
for all $(x_1, \ldots, x_K) \in X^K$,
\begin{eqnarray*}
g_{2, l_1, l_2}(x_{l_1}, \ldots, x_{l_2}) 
& \leq &
f_K(x_1, \ldots, x_{l_1}, \ldots, x_{l_2}, \ldots, x_K) \\
&  &
{} -
f_{K-l_2+l_1-1}(x_1, \ldots, x_{l_1 - 1}, x_{l_2 + 1}, \ldots, x_{K}) \\
& \leq & 
g_{1, l_1, l_2}(x_{l_1}, \ldots, x_{l_2}), \\
g_{1, l_1, l_2}(x_{l_1}, \ldots, x_{l_2}) 
& \leq & c_{1, l_1 l_2}, ~
g_{2, l_1, l_2}(x_{l_1}, \ldots, x_{l_2}) \geq c_{2, l_1 l_2}, ~
c_{l_1 l_2} := c_{1, l_1 l_2} - c_{2, l_1 l_2}, \\
|f_{K-l_2+l_1-1}(x_1, \ldots, x_{l_1 - 1}, x_{l_2 + 1}, \ldots, x_{K})| 
& \leq & 
c.
\end{eqnarray*}
\end{definition}

We now prove our concentration result under expander walks for function 
families satisfying bounded excision. It can be viewed as a generalisation
of McDiarmid's method of bounded differences, which requires independent
sampling, to sampling via an expander walk.
\begin{theorem}
\label{thm:boundedexcisionconc}
Let $G$ be a
constant degree expander graph with vertex set $X$. Let the second largest
eigenvalue in absolute value of $G$ have absolute value $\lambda$.
Let $K$ be a positive integer. Suppose there is a function family
$f_i: X^i \rightarrow \R$, $1 \leq i \leq K$, satisfying bounded
excision with parameters $c$, $c_{l_1 l_2}$ for 
$1 \leq l_1 \leq l_2 \leq K$. Let $\epsilon > 0$. Then,
\[
\Pr_{x_1, \ldots, x_K}[
|f_K(x_1, \ldots, x_K) - \E_{z_1, \ldots, z_K}[f_K(z_1, \ldots, z_K)]|
\geq \epsilon
] \leq
2 \exp\left(-\frac{2\epsilon^2}{\sum_{i=1}^K d_i^2}\right),
\]
where
$
d_i := 2 c_{i,a+i} + c_{a+i+1, a+i+b},
$
$
a := \lceil \frac{\log |X|}{\log \lambda^{-1}} \rceil,
$
$
b := \lceil \frac{\log (c / c_{i,a+i})}{\log \lambda^{-1}} \rceil,
$
and the probability and expectation above are taken via a stationary
random walk of length $K$ on $G$.
\end{theorem}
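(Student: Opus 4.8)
The plan is to mimic the Azuma--Hoeffding martingale argument underlying McDiarmid's inequality, but with the ``one coordinate at a time'' exposure replaced by a coarser exposure that reveals the expander walk in blocks, so that the bounded excision hypothesis supplies the required almost-sure bounds on the martingale increments. Concretely, set $n := K$ and let $\mathcal{F}_i := \sigma(x_1,\ldots,x_i)$ be the natural filtration of the stationary walk. Define the Doob martingale $Z_i := \E[f_K(x_1,\ldots,x_K) \mid \mathcal{F}_i]$, so that $Z_0 = \E[f_K]$ and $Z_K = f_K(x_1,\ldots,x_K)$, and write $D_i := Z_i - Z_{i-1}$. The goal is to establish, for each $i$, a pair of $\mathcal{F}_{i-1}$-measurable random variables $L_i \le D_i \le U_i$ with $U_i - L_i \le d_i$ almost surely; Hoeffding's lemma (Fact~\ref{fact:HoeffdingLemma}) applied conditionally then gives $\E[e^{hD_i}\mid \mathcal{F}_{i-1}] \le \exp(h^2 d_i^2/8)$, and iterating this through the tower property yields $\E[e^{h(Z_K - Z_0)}] \le \exp(h^2 \sum_i d_i^2 / 8)$. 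Optimising $h$ in the Markov bound gives the one-sided tail $\exp(-2\epsilon^2/\sum_i d_i^2)$, and applying the same argument to $-f_K$ and a union bound produces the stated two-sided bound.

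The heart of the matter is bounding $D_i = \E[f_K\mid \mathcal{F}_i] - \E[f_K\mid \mathcal{F}_{i-1}]$. Here is where bounded excision enters: with $a := \lceil \log|X| / \log\lambda^{-1}\rceil$ and $b := \lceil \log(c/c_{i,a+i})/\log\lambda^{-1}\rceil$, I would ``excise'' the block of coordinates $x_i, x_{i+1}, \ldots, x_{i+a}$ (a window of length $a+1$ starting at the freshly-revealed coordinate) and use the decomposition $f_K = f_{K-a-1} + (f_K - f_{K-a-1})$ furnished by Definition~\ref{def:boundedexcision} with $(l_1,l_2) = (i, a+i)$. The first term $f_{K-a-1}(x_1,\ldots,x_{i-1},x_{a+i+1},\ldots,x_K)$ does not depend on the excised block, but it still depends on the future coordinates $x_{a+i+1},\ldots,x_K$; the key point is that, conditioned on $\mathcal{F}_i$, the distribution of $x_{a+i+1}$ is within $\lambda^{a}$ in total variation of its distribution conditioned only on $\mathcal{F}_{i-1}$ (by Fact~\ref{fact:expanderwalk} and the choice of $a$, the dependence on $x_i$ has decayed below $1/|X|$), so the contribution of this term to $D_i$ is small --- and here the second excision window of length $b$, together with the bound $|f_{K-a-1-b+\cdots}| \le c$ and the decay $\lambda^b \le c_{i,a+i}/c$, is used to convert this total-variation closeness into an additive error of order $c_{a+i+1,a+i+b}$. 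The second term $f_K - f_{K-a-1}$ is, by the excision hypothesis, sandwiched between the $\mathcal{F}_i$-measurable (indeed $\sigma(x_i,\ldots,x_{a+i})$-measurable) functions $g_{2,i,a+i}$ and $g_{1,i,a+i}$, whose range is $c_{i,a+i}$; taking conditional expectations over $\mathcal{F}_i$ versus $\mathcal{F}_{i-1}$ costs at most $2c_{i,a+i}$ in the increment. Adding the two contributions gives the increment bound $d_i = 2c_{i,a+i} + c_{a+i+1,a+i+b}$.

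The main obstacle, and the step requiring the most care, is the second bullet of the previous paragraph: making rigorous the claim that the ``excised'' function $f_{K-a-1}$, which still depends on the long tail of the walk, contributes only $O(c_{a+i+1,a+i+b})$ to $D_i$. One must argue that
\[
\bigl| \E[f_{K-a-1}\mid \mathcal{F}_i] - \E[f_{K-a-1}\mid \mathcal{F}_{i-1}] \bigr| \le c_{a+i+1,a+i+b}
\]
using only that the marginal law of $x_{a+i+1}$ shifts by at most $\lambda^a \le 1/|X|$ when we additionally condition on $x_i$, that the walk is Markov (so all further dependence on $x_i$ factors through $x_{a+i+1}$), and a second application of bounded excision to the coordinate block $x_{a+i+1},\ldots,x_{a+i+b}$ bounding the sensitivity of $f_{K-a-1}$ to its own first coordinate by the telescoped quantity $c_{a+i+1,a+i+b}$. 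Handling boundary effects when $a+i$ or $a+i+b$ exceeds $K$ (in which case the relevant windows are truncated and the corresponding $c$-parameters are interpreted as $0$ or the increments are simply bounded by the global bound $2c$) is routine but must be stated. Once this lemma is in place, the martingale machinery is entirely standard and the theorem follows.
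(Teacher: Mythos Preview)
Your proposal is correct and follows essentially the same route as the paper: a Doob martingale $Z_i=\E[f_K\mid\mathcal F_i]$, conditional application of Hoeffding's lemma to each increment, and the range bound on $D_i$ obtained by excising the block $[i,i+a]$, then the block $[i+a+1,i+a+b]$, and finally using expander mixing (Fact~\ref{fact:expanderwalk}) together with the uniform bound $|f_{K-a-b-1}|\le c$ on the twice-excised remainder. The paper frames the increment bound as $\max_{x_i}h_i-\min_{x_i}h_i\le d_i$ (comparing walks launched from two extremal values $x_i(1),x_i(2)$), which is exactly your $U_i-L_i\le d_i$ once you note $Z_{i-1}$ is $\mathcal F_{i-1}$-measurable; your bookkeeping of the three contributions ($2c_{i,i+a}$ from the first block, $c_{i+a+1,i+a+b}$ from the residual) is attributed slightly differently from the paper's ($c_{i,i+a}$ from the first-block $g$'s, $c_{i+a+1,i+a+b}$ from the second-block $g$'s, and $c\cdot\|p(1,a+b)-p(2,a+b)\|_1\le c_{i,i+a}$ from the final remainder), but the total is the same $d_i$ and the mechanism is identical.
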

\begin{proof}
We follow the general outline of Ying's method \cite{Ying:bounded} giving
an elementary proof of Fact~\ref{fact:mcdiarmid}. However, Ying's method 
required independent samples and so we have to suitably modify our
strategy in order to handle sampling via an expander walk. We will
show a concentration upper bound for the upper tail. Concentration
upper bound for the lower tail can be proved similarly. Combining
the two concentration bounds gives the claim of the theorem.

For $1 \leq i \leq K$, define a function $h_i: X^i \rightarrow \R$ by
\[
h_i(x_1, \ldots, x_i) :=
\E_{z_{i+1}, \ldots, z_K}[f_K(x_1, \ldots, x_i, z_{i+1}, \ldots, z_K)] -
\E_{z_{i}, \ldots, z_K}[f_K(x_1, \ldots, x_{i-1}, z_{i}, \ldots, z_K)],
\]
where the expectations are taken over random walks on $G$
starting from $x_i$ and $x_{i-1}$ respectively.
Observe that for any $(x_1, \ldots, x_K) \in X^K$,
\[
\sum_{i=1}^K h_i(x_1, \ldots, x_i) =
f_K(x_1, \ldots, x_K) -
\E_{z_1, \ldots, z_K}[f_K(z_1, \ldots, z_K)] 
~ \mbox{AND} ~
\E_{x_i}[h_i(x_1, \ldots, x_i)] = 0,
\]
where the expectations are taken over a stationary random walk of 
length $K$ on $G$ and a random choice of a neighbour of $x_{i-1}$
respectively. Let $\theta > 0$. Then,
\begin{eqnarray*}
\lefteqn{
\Pr_{x_1, \ldots, x_K}[
f_K(x_1, \ldots, x_K) - \E_{z_1, \ldots, z_K}[f_K(z_1, \ldots, z_K)]
\geq \epsilon
] 
} \\
& = &
\Pr_{x_1, \ldots, x_K}\left[
\sum_{i=1}^K h_i(x_1, \ldots, x_i) \geq \epsilon
\right] 
 \leq 
\Pr_{x_1, \ldots, x_K}\left[
\exp\left(\theta \sum_{i=1}^K h_i(x_1, \ldots, x_i)\right) 
\geq e^{\theta \epsilon}
\right] \\ 
& \leq &
e^{-\theta \epsilon}
\E_{x_1, \ldots, x_K}
\left[\exp\left(\theta \sum_{i=1}^K h_i(x_1, \ldots, x_i)\right)\right].
\end{eqnarray*}
We have,
\begin{eqnarray*}
\lefteqn{
\E_{x_1, \ldots, x_K}
\left[\exp\left(\theta \sum_{i=1}^K h_i(x_1, \ldots, x_i)\right)\right]
} \\
& = &
\E_{x_1, \ldots, x_{K-1}}
\left[
\exp\left(\theta \sum_{i=1}^{K-1} h_i(x_1, \ldots, x_i)\right)
\E_{x_K}[
e^{\theta h_K(x_1, \ldots, x_{K-1}, x_K)}
\right],
\end{eqnarray*}
where the second expectation in the right hand size of the equality
is taken over a random choice of a neighbour $x_K$ of vertex $x_{K-1}$.
Now for any fixed values for $x_1, \ldots, x_{K-1}$, the random variable
$Y := h_K(x_1, \ldots, x_{K-1}, x_K)$, where the randomness comes from
the choice of $x_K$ given a fixed $x_{K-1}$, satisfies the
conditions of Fact~\ref{fact:HoeffdingLemma} with 
$b - a \leq c_{K,K} \leq d_K$. Hence,
\begin{eqnarray*}
\lefteqn{
\E_{x_1, \ldots, x_K}
\left[\exp\left(\theta \sum_{i=1}^K h_i(x_1, \ldots, x_i)\right)\right]
} \\
& \leq &
\E_{x_1, \ldots, x_{K-1}}
\left[
\exp\left(\theta \sum_{i=1}^{K-1} h_i(x_1, \ldots, x_i)\right)
\right] \cdot
\exp\left(\frac{\theta^2 d_K^2}{8}\right).
\end{eqnarray*}
We will argue similarly for each $i = K-1, \ldots, 1$ to finally show
\[
\E_{x_1, \ldots, x_K}
\left[\exp\left(\theta \sum_{i=1}^K h_i(x_1, \ldots, x_i)\right)\right]
\leq
\exp\left(\frac{\theta^2 \sum_{i=1}^K d_i^2}{8}\right).
\]
We only have to show that at stage $i$,
for all $(x_1, \ldots, x_{i-1}) \in X^{i-1}$,
\[
\max_{x_i \in X} h_i(x_1, \ldots, x_i) -
\min_{x_i \in X} h_i(x_1, \ldots, x_i) 
\leq
d_i.
\]

Fix values for $x_1, \ldots, x_{i-1}$. 
Suppose the maximum for $h_i(x_1, \ldots, x_{i-1}, x_i)$ is attained
at $x_i(1)$ and the minimum at $x_i(2)$. The issue here is that the
expander walk $z_{i+1}, \ldots, z_K$ on $G$ starting at 
$z_i = x_i(1)$ has a different probability distribution than the
walk starting at $z_i = x_i(2)$. This is where we use bounded excision
property (Definition~\ref{def:boundedexcision}) and the rapidly
mixing property of expander walks (Fact~\ref{fact:expanderwalk}) to
obtain the upper bound of $d_i$ at stage $i$ as desired above.
Let $p^X(1, t)$ be the probability distribution of vertex $z_{i+t}$
of the random walk on $G$ starting from $z_i = x_i(1)$. Similarly,
we define the probability distribution $p^X(2, t)$. By 
Fact~\ref{fact:expanderwalk}, we have
\[
p^X(1,t) = \frac{1^X}{|X|} + v^X(1,t), ~~
p^X(2,t) = \frac{1^X}{|X|} + v^X(2,t),
\]
where $v^X(1,t)$, $v^X(2,t)$ are vectors in $\R^X$ satisfying
$\|v^X(1,t)\|_2 \leq \lambda^t$, $\|v^X(2,t)\|_2 \leq \lambda^t$. So
\[
\|p^X(1,t) - p^X(2,t)\|_2 \leq 2\lambda^t \implies
\|p^X(1,t) - p^X(2,t)\|_1 \leq 2\lambda^t |X|^{1/2}.
\]
Hence at $t = a = \lceil \frac{\log |X|}{\log \lambda^{-1}} \rceil$,
$\|p^X(1,t) - p^X(2,t)\|_1 \leq 1$. Thus at $t = a + l$,
$\|p^X(1,t) - p^X(2,t)\|_1 \leq \lambda^l$. 
For $l = b = \lceil \frac{\log (c / c_{i,a+i})}{\log \lambda^{-1}} \rceil$,
$\|p^X(1,t) - p^X(2,t)\|_1 \leq \frac{c_{i,i+a}}{c}$. 

By Definition~\ref{def:boundedexcision},
\begin{eqnarray*}
\lefteqn{
h_i(x_1, \ldots, x_{i-1}, x_i(1)) - h_i(x_1, \ldots, x_{i-1}, x_i(2))
} \\
& = &
\E_{\substack{z_{i+1}, \ldots, z_K \\ \mathrm{from}~x_i(1)}}[
f_K(x_1, \ldots, x_{i-1}, x_i(1), z_{i+1}, \ldots, z_K)
] \\
&  &
{} -
\E_{\substack{z_{i+1}, \ldots, z_K \\ \mathrm{from}~x_i(2)}}[
f_K(x_1, \ldots, x_{i-1}, x_i(2), z_{i+1}, \ldots, z_K)
] \\
& \leq &
\E_{\substack{z_{i+1}, \ldots, z_K \\ \mathrm{from}~x_i(1)}}[
g_{1,i,i+a}(x_i(1), z_{i+1}, \ldots, z_{i+a}) +
g_{1,i+a+1,i+a+b}(z_{i+a+1}, \ldots, z_{i+a+b}) \\
&  &
~~~~~~~~~~~~~~~~~~~~
{} +
f_{K - a - b - 1}(x_1, \ldots, x_{i-1}, z_{i+a+b+1}, \ldots, z_K)
] \\
&      &
{} -
\E_{\substack{z_{i+1}, \ldots, z_K \\ \mathrm{from}~x_i(2)}}[
g_{2,i,i+a}(x_i(2), z_{i+1}, \ldots, z_{i+a}) +
g_{2,i+a+1,i+a+b}(z_{i+a+1}, \ldots, z_{i+a+b}) \\
&  &
~~~~~~~~~~~~~~~~~~~~
{} +
f_{K - a - b - 1}(x_1, \ldots, x_{i-1}, z_{i+a+b+1}, \ldots, z_K)
] \\
& \leq &
c_{i,i+a} + c_{i+a+1,i+a+b} \\
&  &
{} +
\E_{\substack{z_{i+1}, \ldots, z_K \\ \mathrm{from}~x_i(1)}}[
f_{K - a - b - 1}(x_1, \ldots, x_{i-1}, z_{i+a+b+1}, \ldots, z_K)
] \\
&  &
{} -
\E_{\substack{z_{i+1}, \ldots, z_K \\ \mathrm{from}~x_i(2)}}[
f_{K - a - b - 1}(x_1, \ldots, x_{i-1}, z_{i+a+b+1}, \ldots, z_K)
] \\
& \leq &
c_{i,i+a} + c_{i+a+1,i+a+b} +
c \|p^X(1,a+b+1) - p^X(2,a+b+1)\|_1 \\
& \leq &
c_{i,i+a} + c_{i+a+1,i+a+b} + c \cdot \frac{c_{i,i+a}}{c} 
  =
2 c_{i,i+a} + c_{i+a+1,i+a+b} 
   =
d_i.
\end{eqnarray*}
This finally shows what we wanted viz.
\[
\E_{x_1, \ldots, x_K}
\left[\exp\left(\theta \sum_{i=1}^K h_i(x_1, \ldots, x_i)\right)\right]
\leq
\exp\left(\frac{\theta^2 \sum_{i=1}^K d_i^2}{8}\right).
\]

Hence,
\[
\Pr_{x_1, \ldots, x_K}[
f_K(x_1, \ldots, x_K) - \E_{z_1, \ldots, z_K}[f_K(z_1, \ldots, z_K)]
\geq \epsilon
]  \leq
e^{-\theta \epsilon} \cdot
e^{\frac{\theta^2}{8} \sum_{i=1}^K d_i^2}
\]
Setting the minimising value for 
$\theta = \frac{4\epsilon}{\sum_{i=1}^K d_i^2}$ proves the desired
upper bound on the upper tail probability.

This completes the proof of the theorem.
\end{proof}

We can now prove our new smooth expander matrix Chernoff bound for
trace distance. Note that the upper bound on the tail probability
does not involve the dimension of the ambient Hilbert space $M$ of the
quantum states, though it does involve the size of the classical alphabet
$X$ and a smooth conditional entropy.
\begin{theorem}[{\bf Smooth expander matrix Chernoff bound for 
trace distance}]
\label{thm:expanderconc}
Let $\delta > 0$.
Under the setting of Theorem~\ref{thm:expandersoftcoveringexpectaton},
with $p^X = \frac{1^X}{|X|}$ i.e. $p^X$ is the uniform probability
distribution on $X$,
\[
\Pr_{x_1, \ldots, x_K}
\left[
\|\frac{1}{K} \sum_{i=1}^K \rho_{x_i}^M - \rho^M\|_1 >
2 \sqrt{\epsilon} + \delta
\right] \leq 
2 \exp\left(-\frac{K \delta^2 (\log \lambda^{-1})^2}
		  {10 (\log |X| + \log K - \log \log |X|)^2}
      \right),
\]
if 
\[
\log K > 
\log |X| + \log \log |X| - \Hmin^\epsilon(X|M)_\rho + \log \epsilon^{-1}.
\]
\end{theorem}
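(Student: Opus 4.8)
The plan is to invoke the method of bounded excision, Theorem~\ref{thm:boundedexcisionconc}, on the function $f_K(x_1,\ldots,x_K) := \|\tfrac1K\sum_{i=1}^K\rho_{x_i}^M - \rho^M\|_1$, and then to replace its expectation by $2\sqrt{\epsilon}$ using the expander soft covering lemma in expectation, Theorem~\ref{thm:expandersoftcoveringexpectaton}. Since here $p^X = \tfrac{1^X}{|X|}$, every $p(x_i) = \tfrac1{|X|}$, so $\tfrac{|X|}{K}\sum_{i=1}^K p(x_i)\rho_{x_i}^M = \tfrac1K\sum_{i=1}^K\rho_{x_i}^M$; hence under the stated lower bound on $\log K$, Theorem~\ref{thm:expandersoftcoveringexpectaton} gives $\E_{x_1,\ldots,x_K}[f_K] < 2\sqrt{\epsilon}$. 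Consequently $\{f_K > 2\sqrt{\epsilon}+\delta\} \subseteq \{f_K - \E f_K > \delta\} \subseteq \{|f_K - \E f_K| \ge \delta\}$, and it remains only to bound the probability of the last event by bounded excision.

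To set up bounded excision I would extend $f_K$ to the family $f_m(y_1,\ldots,y_m) := \|\tfrac1K\sum_{j=1}^m\rho_{y_j}^M - \rho^M\|_1$ for $1\le m\le K$. The point of keeping the full reference state $\rho^M$ rather than the partial one $\tfrac mK\rho^M$ is that excising a block $[l_1,l_2]$ of length $\ell := l_2-l_1+1$ then perturbs the argument of the trace norm only by $A := \tfrac1K\sum_{j=l_1}^{l_2}\rho_{x_j}^M$, a sum of positive semidefinite matrices whose trace norm equals $\Tr A = \tfrac\ell K$ regardless of the block's contents. Thus, by the triangle inequality for $\|\cdot\|_1$, the difference $f_K - f_{K-\ell}$ lies in $[-\tfrac\ell K,\tfrac\ell K]$ for every $(x_1,\ldots,x_K)$, so one may take the bracketing functions $g_{1,l_1,l_2},g_{2,l_1,l_2}$ of Definition~\ref{def:boundedexcision} to be the constants $\pm\tfrac\ell K$, yielding $c_{l_1l_2} = \tfrac{2\ell}{K}$; and $|f_m| \le \tfrac mK + \|\rho^M\|_1 \le 2$ for every arity $m\le K$, so $c = 2$. (Had I instead used the shifted reference $\tfrac mK\rho^M$ in $f_m$, I would only get $c_{l_1l_2} = \tfrac{4\ell}{K}$, which turns out to be too lossy for the claimed constant; this choice of family is what makes the arithmetic work.)

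With these parameters, $a = \lceil\tfrac{\log|X|}{\log\lambda^{-1}}\rceil$, $c_{i,a+i} = \tfrac{2(a+1)}{K}$, $b = \lceil\tfrac{\log(K/(a+1))}{\log\lambda^{-1}}\rceil$, $c_{a+i+1,a+i+b} = \tfrac{2b}{K}$, and every $d_i = 2c_{i,a+i} + c_{a+i+1,a+i+b}$ is at most $\tfrac{4(a+1)+2b}{K}$ (the shorter blocks forced near the end of the walk only shrink the corresponding $d_i$). Hence $\sum_{i=1}^K d_i^2 \le \tfrac{(4(a+1)+2b)^2}{K}$, and Theorem~\ref{thm:boundedexcisionconc} gives $\Pr[|f_K - \E f_K| \ge \delta] \le 2\exp\!\big(-2K\delta^2/(4(a+1)+2b)^2\big)$. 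It then suffices to verify $4(a+1)+2b \le \tfrac{\sqrt{20}\,(\log|X|+\log K-\log\log|X|)}{\log\lambda^{-1}}$, which converts this exponent into exactly the one claimed.

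I expect this last verification to be where essentially all of the work lies. Writing $L := \log\lambda^{-1} > 2$ (using $\lambda < 1/4$), the ceilings give $a+1 \le \tfrac{\log|X|}{L}+2$ and, since $a+1 > \tfrac{\log|X|}{L}$, $b \le \tfrac{\log K - \log\log|X| + \log L}{L}+1$; thus $(4(a+1)+2b)L \le 4\log|X| + 2\log K - 2\log\log|X| + O(L)$, the $O(L)$ term being a constant since the degree is constant. The $\log\log|X|$ in the target arises precisely from $\log(c/c_{i,a+i}) = \log(K/(a+1)) = \log K - \log\log|X| + O(\log L)$, while the slack between $4$ and $\sqrt{20}$ together with the constant $O(L)$ error are absorbed using $\log K - \log\log|X| > \log|X| - \Hmin^\epsilon(X|M)_\rho + \log\epsilon^{-1} \ge \log\epsilon^{-1}$ and the hypothesis that $\epsilon$ is sufficiently small. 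The conceptual content is just the interplay of bounded excision with the expectation bound; the obstacle is the careful choice of the family $\{f_m\}$ minimising the bounded-excision parameters, followed by the bookkeeping of ceilings and lower-order terms needed to land on the stated constant $10$ and the $-\log\log|X|$ correction.
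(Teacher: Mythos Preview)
Your proposal is correct and follows essentially the same route as the paper: the same family $f_m(y_1,\ldots,y_m)=\|\tfrac1K\sum_{j=1}^m\rho_{y_j}^M-\rho^M\|_1$, the same bounded-excision parameters $c_{l_1 l_2}=\tfrac{2(l_2-l_1+1)}{K}$ and $c=2$, and the same appeal to Theorem~\ref{thm:expandersoftcoveringexpectaton} for the expectation. Your constant bracketing functions $\pm\tfrac{\ell}{K}$ coincide with the paper's $\pm\|\tfrac1K\sum_{j=l_1}^{l_2}\rho_{x_j}^M\|_1$ since the $\rho_{x_j}^M$ are normalised; and your more explicit handling of the ceilings and of the passage from $4(a+1)+2b$ to the constant $10$ via the ``$\epsilon$ sufficiently small'' hypothesis simply spells out what the paper leaves implicit.
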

\begin{proof}
We apply Theorem~\ref{thm:boundedexcisionconc} with 
$
f_i(x_1, \ldots, x_i) :=
\|\frac{1}{K} \sum_{j=1}^i \rho_{x_j}^M - \rho^M\|_1
$
for $1 \leq i \leq K$, and for any $1 \leq l_1 \leq l_2 \leq K$,
\[
g_{1,l_1, l_2}(x_{l_1}, \ldots, x_{l_2}) :=
\|\frac{1}{K} \sum_{j=l_1}^{l_2} \rho_{x_j}^M\|_1, ~
g_{2,l_1, l_2}(x_{l_1}, \ldots, x_{l_2}) :=
-g_{1,l_1, l_2}(x_{l_1}, \ldots, x_{l_2}),
\]
and
\[
c_{1, l_1 l_2} := \frac{l_2 - l_1 + 1}{K}, ~
c_{2, l_1 l_2} := -c_{1, l_1 l_2}, ~
c_{l_1 l_2} = 2 c_{1, l_1 l_2}, ~
c := 2.
\]

From Theorem~\ref{thm:expandersoftcoveringexpectaton} and the
assumed condition on $\log K$,
\[
\E_{x_1, \ldots, x_K}
\left[
\|\frac{1}{K} \sum_{i=1}^K \rho_{x_i}^M - \rho^M\|_1 
\right] < 2\sqrt{\epsilon}.
\]
By a simple calculation we get,
$a = \frac{\log |X|}{\log \lambda^{-1}}$,
$c_{i,i+a} = \frac{2(a+1)}{K}$,
$b = \frac{\log (K/(a+1))}{\log \lambda^{-1}}$,
$c_{i+a+1,i+a+b} = \frac{2b}{K}$.
Thus $d_i = \frac{4(a+1)}{K} + \frac{2 b}{K} < \frac{4(a+b+1)}{K}$.
Hence,
\[
\sum_{i=1}^K d_i^2 <
\frac{16 (a+b+1)^2}{K} <
\frac{16 (\log |X| + \log K  - \log \log |X|)^2}{K (\log \lambda^{-1})^2}.
\]
Applying Theorem~\ref{thm:boundedexcisionconc} now completes the
proof of the present theorem.
\end{proof}

\section{Conclusion}
\label{sec:conclusion}
In this work, we have obtained a novel fully smooth multipartite matrix
Chernoff bound for the trace distance under independent samples.
Our upper bound on the tail probability does not explicitly depend on 
the dimension
of the ambient Hilbert space of the quantum states. It only
depend on certain fully smooth R\'{e}nyi $2$-divergence quantities of
the ensemble of quantum states. Our upper bound is the right expression
required to prove strong inner bounds for private classical communication 
over multiterminal quantum wiretap channels with many non-interacting
eavesdroppers; the resulting inner bounds are independent of the
dimensions of eavesdroppers' Hilbert spaces. Such powerful inner bounds
for wiretap channels were unknown even in the fully classical setting.
We also prove the first smooth expander matrix Chernoff bound for the
trace distance. Again, the upper bound on the tail probability does not
explicitly depend on the dimension
of the ambient Hilbert space of the quantum states, 
though it does involve the size of the classical alphabet
and a smooth conditional entropy of the ensemble.
Our new expander matrix Chernoff bound is proved by generalising 
McDiarmid's method
of bounded differences for functions of independent random variables
to a new method of bounded excision for functions of
expander walks.

An immediate open problem is to generalise our concentration method
via bounded excision for expander walks to the multipartite setting.
The main bottleneck in this endeavour is that we have no good way
of bounding the expected value of the function under expander walks
in the multipartite setting, because expander walks do not satisfy
pairwise independence amongst the samples. The fully smooth multipartite 
soft covering lemma in expectation of \cite{Sen:flatten} does hold 
when pairwise
independence amongst the samples is `slightly broken'. Unfortunately,
the notion of `slightly broken' in that work fails to capture expander
walks. Extending the methods of \cite{Sen:flatten} to handle expander
walks will be a good challenge.

\section*{Acknowledgements}
I acknowledge support of the Department of Atomic Energy, Government
of India, under project no. RTI4001.

\bibliography{matrixchernoff}

\begin{thebibliography}{GLSS18}

\bibitem[ADJ17]{Jain:convexsplit}
A.~Anshu, V.~Devabathini, and R.~Jain.
\newblock Quantum communication using coherent rejection sampling.
\newblock {\em Physical Review Letters}, 119:120506:1--120506:21, 2017.

\bibitem[AJW18]{anshu:slepianwolf}
A.~Anshu, R.~Jain, and N.~Warsi.
\newblock A generalized quantum {Slepian–Wolf}.
\newblock {\em IEEE Transactions on Information Theory}, 64:1436--1453, 2018.

\bibitem[AW02]{Ahlswede:matrixchernoff}
R.~Ahlswede and A.~Winter.
\newblock Strong converse for identification via quantum channels.
\newblock {\em IEEE Transactions on Information Theory}, 48(3):569--579, 2002.

\bibitem[Che52]{Chernoff:conc}
H.~Chernoff.
\newblock A measure of asymptotic efficiency for tests of a hypothesis based on
  the sum of observations.
\newblock {\em Annals of Mathematical Statistics}, 23:493--509, 1952.

\bibitem[CNS21]{Chakraborty:wiretapQMAC}
S.~Chakraborty, A.~Nema, and P.~Sen.
\newblock One-shot inner bounds for sending private classical information over
  a quantum mac.
\newblock In {\em IEEE Information Theory Workshop (ITW)}, 2021.
\newblock Full version in arXiv::2105.06100.

\bibitem[CPS22]{Chakraborty:measurementcompression}
S.~Chakraborty, A.~Padakandla, and P.~Sen.
\newblock Centralised multi link measurement compression with side information.
\newblock In {\em IEEE International Symposium on Information Theory (ISIT)},
  2022.
\newblock Full version in arXiv::2203.16157.

\bibitem[Cuf16]{Cuff:softcovering}
P.~Cuff.
\newblock Soft covering with high probability.
\newblock In {\em IEEE International Symposium on Information Theory (ISIT)},
  2016.
\newblock Full version at arXiv::1605.06396.

\bibitem[Dev05]{Devetak:wiretap}
I.~Devetak.
\newblock The private classical capacity and quantum capacity of a quantum
  channel.
\newblock {\em IEEE Transactions on Information Theory}, 51:44--55, 2005.

\bibitem[Gil98]{Gillman:expanderchernoff}
D.~Gillman.
\newblock A {Chernoﬀ} bound for random walks on expander graphs.
\newblock {\em SIAM Journal on Computing}, 27(4):1203--1220, 1998.

\bibitem[GLSS18]{Garg:expandermatrixchernoff}
A.~Garg, Y.~Lee, Z.~Song, and N.~Srivastava.
\newblock A matrix expander {Chernoff} bound.
\newblock In {\em ACM SIGACT Symposium on Theory of Computing (STOC)}, pages
  1102--1114, 2018.
\newblock Also arXiv::1704.03864 (math.PR).

\bibitem[Hoe63]{Hoeffding:conc}
W.~Hoeffding.
\newblock Probability inequalities for sums of bounded random variables.
\newblock {\em Journal of the American Statistical Association}, 58:13--30,
  1963.

\bibitem[McD89]{McDiarmid:bounded}
C.~McDiarmid.
\newblock On the method of bounded differences.
\newblock In {\em Surveys in Combinatorics}. Cambridge University Press, 1989.

\bibitem[MR95]{MotwaniRaghavan}
R.~Motwani and P.~Raghavan.
\newblock {\em Randomized Algorithms}.
\newblock Cambridge University Press, 1995.

\bibitem[RSW17]{Radhakrishnan:wiretap}
J.~Radhakrishnan, P.~Sen, and N.~Warsi.
\newblock One-shot private classical capacity of quantum wiretap channel:
  {Based} on one-shot quantum covering lemma.
\newblock Proceedings of QCRYPT workshop. Also arXiv:1703.01932, 2017.

\bibitem[Sen24a]{Sen:telescoping}
P.~Sen.
\newblock Fully smooth one shot multipartite covering and decoupling of quantum
  states via telescoping.
\newblock arXiv:2410.17893, 2024.

\bibitem[Sen24b]{Sen:flatten}
P.~Sen.
\newblock Fully smooth one shot multipartite soft covering of quantum states
  without pairwise independence.
\newblock arXiv:2410.17908, 2024.

\bibitem[Tro15]{Tropp:book}
J.~Tropp.
\newblock An introduction to matrix concentration inequalities.
\newblock {\em Foundations and Trends\,{\rm \sffamily\textregistered} in
  Machine Learning}, 8(1-2):1--230, 2015.

\bibitem[Wil17]{Wilde:wiretap}
M.~Wilde.
\newblock Position-based coding and convex splitting for private communication
  over quantum channels.
\newblock {\em Quantum Information Processing}, 16:264:1--264:31, 2017.

\bibitem[Win04]{Winter:measurementcompression}
A.~Winter.
\newblock {``Extrinsic''} and ``intrinsic'' data in quantum measurements:
  {Asymptotic} convex decomposition of positive operator valued measures.
\newblock {\em Communications in Mathematical Physics}, 244(1):157--185, 2004.

\bibitem[WX05]{Xiao:expandermatrixchernoff}
A.~Wigderson and D.~Xiao.
\newblock A randomness-efficient sampler for matrix-valued functions and
  applications.
\newblock In {\em IEEE Symposium on Foundations of Computer Science (FOCS)},
  pages 397--406, 2005.

\bibitem[Yin04]{Ying:bounded}
Y.~Ying.
\newblock {McDiarmid’s} inequalities of {Bernstein} and {Bennett} forms.
\newblock Technical report, City University of Hong Kong. Available at
  https://empslocal.ex.ac.uk/people/staff/yy267/McDiarmid.pdf, 2004.

\end{thebibliography}

\end{document}